\documentclass[twoside,11pt]{article}
\usepackage{amsmath,amssymb,amsthm,amsfonts,mathrsfs,wasysym,latexsym,times,lineno, subfigure,color,booktabs}
\usepackage{amsmath,amsfonts}
\usepackage{multirow}
\usepackage{array}
\usepackage{amsthm}
\usepackage{graphicx}
\usepackage{color}
\usepackage{multicol}
\usepackage{float}


\topmargin=-0.1in \oddsidemargin3mm \evensidemargin3mm
\textheight220mm \textwidth160mm

\newtheorem{thm}{Theorem}

\newtheorem{lemma}[thm]{Lemma}

\newtheorem{theorem}[thm]{Theorem}

\newtheorem{proposition}[thm]{Proposition}

\newtheorem{corollary}[thm]{Corollary}

\allowdisplaybreaks[4]
\setcounter{footnote}{-1}
  \date{}

\title{Hiding solutions in model RB: Forced instances are almost as hard as unforced ones}

\author{Guangyan Zhou$^*$\\

\footnotesize Department of Mathematics, Beijing Technology and Business University, Beijing 100048, China}

\begin{document}
\maketitle
\begin{abstract}

In this paper we study the forced instance spaces of model RB, where one or two arbitrary satisfying assignments have been imposed. We prove rigorously that  the expected number  of solutions of forced RB instances is asymptotically the same with those of unforced ones. Moreover, the distribution of forced RB instances in the corresponding forced instance space is asymptotically the same with that of unforced RB instances in the unforced instance space. These results imply that the hidden assignments will not lead to easily solvable formulas, and the hardness of solving forced RB instances will be the same with unforced RB instances.

\end{abstract}
{\bf Keywords:} \small Constraint Satisfaction Problem,  Solution space structure, Hidden assignments, Hard  instances.

\footnotetext[1]{zhouguangyan@btbu.edu.cn. Project supported by the National Natural Science Foundation of China (No. 61702019).}

\section{Introduction}
In mathematics, computer science and statistical physics, there exist many constraint satisfaction problems (CSPs) whose worst-case computation is of exponential complexity. Studies show that solution space structures affect the performance of algorithms and the hardness of problems, and the computing complexity attains the maximum at the satisfiability threshold, where the probability of a random instance being satisfiable changes from 1 to 0 sharply. These hard problems provide challenging benchmarks for evaluation of algorithms. To evaluate the performance of incomplete algorithms around the hard region, it is natural to generate hard but satisfiable benchmarks, which can be done by planting solutions to random CSPs \cite{bar02,Dimitris05,jia2007,krza2009,xu2007}. In graph theory, planted cliques problems \cite{alon98,feige00,feige10,dekel14,deshpande15}, planed partitions \cite{amin09} and planted colourings \cite{bapst17} have been studied  as well. Generating hard satisfiable instances has important applications, e.g. one-way function in cryptography. Indeed, \cite{ju1998} applied
the ``planting" strategy  to random graph problems, and proved that large hidden cliques are as hard to find as large cliques in usual random graphs, thus hidden cliques can be applied to create cryptographically secure primitives.

Sometimes planting a solution may change the properties of the ensemble, thus making it difficult to understand when and why they can provide hard instances. For instance,  random $k$-SAT instances generated by planting an assignment tend to be easy, where the hidden assignment attracts assignments close to it, thus can be solved efficiently. However, ``2-hidden" (where two complementary assignments are hidden) \cite{Dimitris05} and ``$q$-hidden" \cite{jia2007} 3-SAT formulas are as hard as usual random 3-SAT formulas. In this paper we will study the model RB, in which the properties of the usual random instances and random instances with one hidden solution are identical in that, the expected number of solutions and the hardness at the threshold remain the same \cite{xu2000}.

Model RB is a standard prototype CSP model with growing domains revised from the Model B \cite{smith1996}. The proposal of this model is based on the fact that, instances of standard CSP models A, B, C and D do not have an asymptotic threshold due to the presence of flawed variables. To eliminate such flaws, alternative models have been proposed either by incorporating certain combinatorial structure on constraints to ensure certain consistency properties \cite{gao2004,gent2001}, or enlarging the scales of parameters such as domain size and constraint length \cite{smith2001,frieze2006,fan2011,fan2012}. Model RB has been proved to have sharp SAT-UNSAT phase transition and exact  threshold points, and can generate hard instances in the phase transition region \cite{xu19,xu2006}. Solution space structure of model RB was studied in \cite{zhao2012} and \cite{xuwei2015}, and it was proved that clustering phase exists where solutions are clustered into an exponential number of well-separated clusters and each cluster contains a sub-exponential number of solutions, and persists until the satisfiability transition. Thus Model RB has a clustering transition but no condensation transition, and this is different to other classic random CSPs such as random $k$-SAT and graph colouring problems which has a condensation transition.

Further, \cite{xu2007} considered forced satisfiable RB instances on which a solution is imposed. Theoretical and experimental results show that such strategy has almost no effect on the number of solutions and does not lead to a biased sampling of instances with many solutions. The reason is that the distribution of the solutions of model RB is uniform, while conversely for random $k$-SAT, the distribution is highly skewed in that satisfying assignments tend to form clumps, thus causing a biased sampling of instances with many solutions clustered around the hidden solution \cite{Dimitris05,xu2006}. In addition, the hardness of forced satisfiable instances is similar to unforced satisfiable ones, which is supported by theoretical and experimental results. Benchmarks based on model RB have been widely used to test the performance of different algorithms. Experimental results show that the hardness of solving both forced and unforced RB instances grows exponentially with $n$ ($n$ is the number of variables), and instances with $n=59$ appear to be challenging to solve, thus confirm the hardness of benchmarks. Another interesting aspect is that, based on model RB, a random graph model can be proposed in which an  exact maximum independent set can be hidden, since determining whether a trivial upper bound of the size of maximum independent sets can be reached is equivalent to determining the satisfiability of the corresponding instances of the model RB.

In this paper, we hide two arbitrary solutions in random RB instances. We rigorously show that the expected number of solutions of a random forced RB instance with two hidden satisfying assignments, is asymptotically the same with that of unforced RB instance space. Indeed for a random RB instance, the third moment of the number of satisfying assignments equals asymptotically  to the cube of the first moment, implying that the solution space structure of model RB is uniformly distributed, and that the hidden assignments are not likely to attract more satisfying assignments, thus will not lead to easily solvable formulas. More importantly, the distribution of forced RB instances with either one or two hidden solutions in the corresponding instance space do not change compared with unforced ones. Therefore the hardness of solving forced RB instances will be exactly the same with unforced ones, while for other problems, the running time of solving the usual random ensemble and planted random ensemble will differ by a polynomial term. Our analyses are based on a third moment argument which leads to the estimation of a sum with several parameters. The sum is dominated by a small interval which includes a global maximum point, while the  contributions of remaining intervals are negligible. Moreover, from the process of our analysis we can see that, the two hidden assignments have little effect on the dominate assignments which contribute the most to the expected number of solutions.

This paper is organized as follows. In Section \ref{result} we introduce the model RB and present our main results. In Section \ref{proof1} we prove Theorem \ref{expectation}, which is the foundation of proving other main results. The proofs of the remaining results are in Section \ref{proof:th2}. Finally, we present possible application to cryptography, and discuss some future work.

\section{Model RB and Main results}\label{result}
\subsection{Model RB}

 A random instance $I$ of the model RB consists of the following:
 \begin{itemize}
   \item  A set of variables $X=\{x_1,...,x_n\}$: for each $i=1,...,n$,
   variable $x_i$ takes values from its domain $D_i$ and all with the same domain size $|D_i|=d$ ($d=n^\alpha$, where $\alpha>0$ is a constant).
   \item  A set of constraints $C=\{C_1,...,C_m\}$ ($m=rn\ln n$, where $r>0$ is a constant): for each
   $i=1,...,m$, constraint $C_i=(X_i,R_i)$. $X_i=(x_{i_1},x_{i_2},...,x_{i_k})$
   ($k\geq2$ is a constant) is a sequence of $k$ distinct variables chosen
   uniformly at random without repetition from $X$. $R_i$
   is the permitted set of tuples of values which are selected uniformly without repetition
   from the subsets of $ D_{i_1}\times
D_{i_2}\times\cdots\times D_{i_k}$, and $|R_i|=(1-p)d^k$ where $0<p<1$
is a constant.
 \end{itemize}

Xu and Li \cite{xu2000} proved that model RB has a satisfiable-unsatisfiable transition: Assume that $\alpha>\frac1k,0<p<1,k\ge2$ are constants, and $k\ge\frac1{1-p}$. Let $r_c=-\frac{\alpha}{\ln(1-p)}$, then
\begin{eqnarray*}
\nonumber \lim_{n\rightarrow\infty}\mbox{\textbf{Pr\ }}[I\ \mbox{is
satisfiable}]=\left\{
                                                        \begin{array}{cc}
                                                          1 & \mbox{\quad$r$ $<$ $r_c$}, \\
                                                          0 & \mbox{\quad$r$ $>$ $r_c$}. \\
                                                        \end{array}
                                                      \right.
\end{eqnarray*}
In the following, we tacitly assume that $r<r_c$ and $\alpha,p,k$ are constants satisfying the conditions above.

\emph{\textbf{Generate random RB instances with hidden solutions.}}
The method of generating forced satisfiable instances of model RB with one (or two) hidden solutions follows from \cite{xu2006}:
 \begin{itemize}
   \item  Generate a random solution $\sigma$ (or two solutions $\sigma,\tau$).
   \item  Generate $m$ constraints of size $k$ (with repetition).
\item For each constraint, generate $nb$ allowed tuples, where $nb$ is the nearest integer to $(1-p)d^k$.
\begin{itemize}
\item Select as first allowed tuple which satisfies the solution $\sigma$ (resp. $\sigma,\tau$).
\item Select $nb-1$ other tuples (without repetition).
\end{itemize}
 \end{itemize}

Theoretical and experimental results in \cite{xu2006} show that, the expected number of solutions of random RB instances in which one solution is hidden remains the same, and the complexity of solving such random forced  RB instances with one hidden solution near the threshold has exponential lower bounds. In the following, we present our main results based on the study of the solution space structures and distribution of forced RB instances with one or two hidden solutions.

\subsection{Main results}

Let $\mathbf{E}[N]$ be the expected number of solutions over all possible RB instances, and  $\mathbf{E}_f[N^{**}]$ be the expected number of solutions over all forced RB instances on which two solutions are imposed. Our first main result asserts that $\mathbf{E}[N]$ and $\mathbf{E}_f[N^{**}]$ are asymptotically equal within the satisfiable phase.

\begin{theorem}\label{expectation}
Suppose that $r<r_c$, then
\begin{align*}
\lim_{n\rightarrow\infty}\frac{\mathbf{E}_f[N^{**}]}{\mathbf{E}[N]}=1.
\end{align*}
\end{theorem}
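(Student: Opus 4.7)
My plan is to recast the forced expectation as a third-moment quantity of the unforced ensemble. For fixed hidden $\sigma,\tau$, the two-solution forced construction produces the same distribution as the unforced RB ensemble conditioned on both $\sigma$ and $\tau$ being solutions: for each constraint $C_i$, requiring $\{\sigma|_{X_i},\tau|_{X_i}\}\subseteq R_i$ before sampling the remaining allowed tuples uniformly is precisely the conditional law of $R_i$. Averaging over uniform $\sigma,\tau\in D^n$ and applying Bayes' rule yields
\begin{equation*}
\mathbf{E}_f[N^{**}]
=\frac{1}{d^{2n}}\sum_{\sigma,\tau}\frac{\sum_{\pi}P_{3}(\pi,\sigma,\tau)}{P_{2}(\sigma,\tau)},
\end{equation*}
where $P_2$ and $P_3$ denote the joint solution probabilities of two and three fixed assignments in the unforced model. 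This is the third-moment expression flagged in the introduction.

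Next, I would exploit the symmetry of RB under permutations of variables and of domain values: $P_2$ and $\sum_\pi P_3$ depend on $(\sigma,\tau)$ only through the Hamming distance $h:=d(\sigma,\tau)$, so the outer sum collapses to $\sum_h\binom{n}{h}(d-1)^h$ times the corresponding quotient. For the inner sum over $\pi$, I parametrize $(\pi,\sigma,\tau)$ by a five-type count $(a,b,c,e,f)$: on the $n-h$ coordinates where $\sigma=\tau$, let $a$ count where $\pi$ agrees and $e=n-h-a$ the rest; on the $h$ coordinates where $\sigma\ne\tau$, let $b,c,f$ count coordinates where $\pi$ equals $\sigma$, equals $\tau$, or differs from both. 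A direct one-constraint computation gives the probability $q_3(a,b,c,e,f)$ that all three satisfy a single constraint, as a polynomial in $(1-p)$ and in $\theta(s):=\binom{s}{k}/\binom{n}{k}$ evaluated at $s\in\{a,a+b,a+c,a+e\}$, together with $q_2(h)=(1-p)[(1-p)+p\,\theta(n-h)]$. Dividing by $\mathbf{E}[N]=d^n(1-p)^m$, Theorem \ref{expectation} reduces to showing that the resulting five-parameter sum tends to $1$.

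The dominant contribution should come from the bulk $h\approx n(d-1)/d$, $a\approx(n-h)/d$, $b\approx c\approx h/d$ (with $e,f$ determined), where the combinatorial prefactor $\binom{n}{h}(d-1)^h\binom{n-h}{a}(d-1)^e\binom{h}{b,c,f}(d-2)^f/d^{2n}$ is maximized. The conditions $\alpha k>1$ and $r<r_c=-\alpha/\ln(1-p)$ together imply $m\theta(s)=o(1)$ for every argument $s$ appearing in $q_3$ or $q_2$ at the bulk, so $q_3^m/q_2(h)^m=(1-p)^m(1+o(1))$ uniformly there, and the multinomial identity cancelling the entropy factors against $d^{2n}$ collapses the bulk into $\mathbf{E}[N](1+o(1))$. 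I would apply Stirling's formula to write the summand as $\exp\{n\Phi+o(n)\}$ for an explicit $\Phi$ in the rescaled parameters, verify that $\Phi$ attains its unique global maximum (value $0$) at the bulk point, and bound the sum by a local Gaussian estimate on a small neighbourhood together with exponential decay on the complement.

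The main obstacle will be the multi-parameter global optimization. With four continuous directions and four occurrences of $\theta$ at different linear combinations, one must rule out atypical configurations such as $\pi$ agreeing non-negligibly with $\sigma$ or $\tau$, or $\sigma$ itself being unusually close to $\tau$. The control mechanism, emphasized in the introduction, is that each unit of extra agreement introduces a $(d-1)^{-1}$ or $(d-2)^{-1}$ factor in the combinatorial weight, whereas the corresponding gain in $q_3/q_2$ is at most $(1+O(\theta))^m$; the hypothesis $r<r_c$ makes the combinatorial loss dominate in the exponent. This parallels the cancellation driving the second-moment method of \cite{xu2000} and its sharpenings in \cite{zhao2012, xuwei2015}, which I would port to the three-assignment setting. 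The edge case $h=0$, where $\sigma=\tau$ and the two-solution forcing degenerates to one-solution forcing, contributes only an $O(d^{-n})$ fraction to the average and is handled separately.
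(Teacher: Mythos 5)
Your plan is essentially the paper's proof: the same conditional-probability/third-moment reformulation, the same overlap parametrization of $(\pi,\sigma,\tau)$ (your $(h,a,b,c,e,f)$ is the paper's $(S_0,S,M_1,M_2,\dots)$ with the pair $\sigma,\tau$ fixed rather than averaged), the same exact multinomial identity to collapse the bulk, and the same entropy-versus-$(1+O(\theta))^m$ tradeoff under $r<r_c$ off the bulk. Be aware, though, that the "verify the unique global maximum and control the complement" step you defer is where virtually all of the paper's work lies — it requires separate treatment of $\alpha\ge1$ and $\alpha<1$ (in the latter regime the exponent is not monotone near overlap $n^{1-\alpha}$), coordinatewise convexity plus boundary checks, and a case analysis over many sub-ranges, and near the maximum the crude $\exp\{n\Phi+o(n)\}$ form of Stirling is useless, which is exactly why the exact multinomial identity, not a Gaussian approximation, must carry the bulk.
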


We proceed to highlight a few interesting consequences of Theorem \ref{expectation}. Before we carry out the results let us specify some notations.
Consider the instance space $\mathbb{S}$ of model RB, and let $\mathcal{P}$ be the probability distribution of the number of solutions in the entire instance space. Precisely, if we let $\mathcal{E}_t$ be the event that a RB instance has $t$ solutions (where $t$ is a positive integer), then $\mathcal{P}_t=\mathcal{P}(\mathcal{E}_t)$ represents the probability of a random instance  containing exactly $t$ solutions. Correspondingly, let $\mathbb{S}^*$ be the space of instances on which a random solution (say $\sigma$) is imposed, and $\mathcal{P}_t^*=\mathcal{P}(\mathcal{E}_t|\sigma\text{ is a solution})$ be the probability distribution of instances with $t$ solutions over $\mathbb{S}^{*}$; let $\mathbb{S}^{**}$ be the space of instances on which two arbitrary solutions (say $\sigma,\tau$) are imposed, and $\mathcal{P}_t^{**}=\mathcal{P}(\mathcal{E}_t|\sigma,\tau\text{ are solutions})$ be the probability distribution of instances with $t$ solutions over $\mathbb{S}^{**}$.

The following result determines the relations of the asymptotic distributions of RB instances with exact number of solutions among the three instance spaces mentioned above.

\begin{corollary}\label{distribution}
Assume that  $r<r_c$ and $t$ is a positive integer, then the distributions of  RB instances with $t$ solutions in instance spaces $\mathbb{S},\mathbb{S}^*$ and $\mathbb{S}^{**}$ are asymptotically the same as $n\rightarrow\infty$, i.e.,
\begin{align*}
\mathcal{P}_t^{**}\sim\mathcal{P}_t^*\sim\mathcal{P}_t.
\end{align*}
\end{corollary}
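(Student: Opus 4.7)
\emph{Proof proposal.} The plan is to invoke a Bayes-type change-of-measure argument that re-expresses $\mathcal{P}_t^*$ and $\mathcal{P}_t^{**}$ as size-biased versions of $\mathcal{P}_t$, and then use Theorem~\ref{expectation} together with its one-hidden-solution analogue from \cite{xu2006} to show that the size-biasing correction factors tend to $1$ in the bulk.

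First, I would make the generative description of $\mathbb{S}^*$ and $\mathbb{S}^{**}$ explicit. Because the hidden assignment $\sigma$ is drawn uniformly from the $d^n$ candidate assignments and the instance is then resampled conditional on $\sigma$ being a solution (similarly for a pair $(\sigma,\tau)$), a short computation with Bayes' rule gives
\[
\Pr_{\mathbb{S}^*}[I]=\frac{N(I)}{\mathbf{E}[N]}\Pr_{\mathbb{S}}[I],\qquad
\Pr_{\mathbb{S}^{**}}[I]=\frac{N(I)(N(I)-1)}{\mathbf{E}[N(N-1)]}\Pr_{\mathbb{S}}[I].
\]
Summing these identities over instances with $N(I)=t$ yields the key relations
\[
\mathcal{P}_t^*=\frac{t}{\mathbf{E}[N]}\mathcal{P}_t,\qquad
\mathcal{P}_t^{**}=\frac{t(t-1)}{\mathbf{E}[N(N-1)]}\mathcal{P}_t,
\]
so the corollary reduces to showing that $t/\mathbf{E}[N]\to1$ and $t(t-1)/\mathbf{E}[N(N-1)]\to1$ on the support of $\mathcal{P}_t$.

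Next, I would extract the required moment information. Applying the second size-biasing identity to $N$ itself gives $\mathbf{E}_f[N^{**}]=\mathbf{E}[N^2(N-1)]/\mathbf{E}[N(N-1)]$, while the one-forced analogue $\mathbf{E}_f[N^*]\sim\mathbf{E}[N]$ of \cite{xu2006} gives $\mathbf{E}[N^2]\sim\mathbf{E}[N]^2$ and hence $\mathbf{E}[N(N-1)]\sim\mathbf{E}[N]^2$. Combining these with Theorem~\ref{expectation} yields $\mathbf{E}[N^3]\sim\mathbf{E}[N]^3$. Chebyshev's inequality then forces $N/\mathbf{E}[N]\to1$ in probability under $\mathbb{S}$, so the mass of each of the three distributions concentrates on $t\sim\mathbf{E}[N]$. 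On this bulk both correction factors are $1+o(1)$, and consequently $\mathcal{P}_t^{**}\sim\mathcal{P}_t^*\sim\mathcal{P}_t$.

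The chief obstacle, as I see it, is not the algebra but pinning down the regime in which the ratio assertion carries content: the size-biased and unbiased measures can only agree on their common bulk, so one must either interpret ``$\sim$'' as convergence in distribution of the scaled variable $N/\mathbf{E}[N]$ (in which case the concentration step is essentially the whole story) or explicitly restrict $t=(1+o(1))\mathbf{E}[N]$ and track the relative error. The moment input from Theorem~\ref{expectation} and the size-biasing algebra are otherwise routine.
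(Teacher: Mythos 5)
Your proposal is correct in substance and is in fact more careful than the paper's own argument, though both start from the same place (Bayes' rule plus the moment asymptotics $\mathbf{E}_f[N^*]\sim\mathbf{E}[N]$ and $\mathbf{E}_f[N^{**}]\sim\mathbf{E}[N]$). The paper does not derive your exact size-biasing identities; instead it observes that $\mathcal{P}[\sigma\text{ is a solution}\,|\,\mathcal{E}_t]$ is increasing in $t$, asserts $\mathcal{P}[\sigma\text{ is a solution}\,|\,\mathcal{E}_1]=\mathcal{P}[\sigma\text{ is a solution}]$, deduces the pointwise inequality $\mathcal{P}_t^*\ge\mathcal{P}_t$ for every $t$, and then concludes $\mathcal{P}_t^*\sim\mathcal{P}_t$ from this inequality together with $\sum_t t\mathcal{P}_t^*\sim\sum_t t\mathcal{P}_t$ (and analogously for $\mathcal{P}_t^{**}$). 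Your exact relation $\mathcal{P}_t^*=t\,\mathcal{P}_t/\mathbf{E}[N]$ shows that the intermediate claims in the paper cannot be literally right: $\mathcal{P}[\sigma\text{ is a solution}\,|\,\mathcal{E}_1]=1/d^n$ whereas $\mathcal{P}[\sigma\text{ is a solution}]=\mathbf{E}[N]/d^n$, and $\mathcal{P}_t^*\ge\mathcal{P}_t$ holds only for $t\ge\mathbf{E}[N]$. Your route — exact change of measure, then concentration via $\mathbf{E}[N^2]\sim\mathbf{E}[N]^2$ and $\mathbf{E}[N^3]\sim\mathbf{E}[N]^3$ to localize all three measures on $t=(1+o(1))\mathbf{E}[N]$ — is the honest way to make the corollary precise, and your closing remark correctly identifies that the statement as written (fixed positive integer $t$) cannot be read as a ratio statement for each individual $t$, since then $\mathcal{P}_t^*/\mathcal{P}_t=t/\mathbf{E}[N]\to0$.

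Two small points to tighten. First, your concentration step is stated only under $\mathbb{S}$; to say the \emph{forced} measures also live on the bulk you need, e.g., $\Pr_{\mathbb{S}^*}[\,|N/\mathbf{E}[N]-1|>\varepsilon\,]=\mathbf{E}[N\mathbf{1}_{|N/\mathbf{E}[N]-1|>\varepsilon}]/\mathbf{E}[N]\to0$ by Cauchy--Schwarz from the second moment, and the analogous bound for $\mathbb{S}^{**}$ from the third moment; this is routine but should be said. Second, your derivation of $\mathbf{E}[N^3]\sim\mathbf{E}[N]^3$ from Theorem~\ref{expectation} is exactly the paper's proof of Corollary~\ref{thm:thirdmoment}, so you can simply cite that corollary rather than rederive it.
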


Theorem \ref{expectation} also yields the following result, which implies that the solutions of random RB instances are distributed uniformly.
\begin{corollary}\label{thm:thirdmoment}
Assume that  $r<r_c$, and $N$ being the number of solutions of a random  unforced RB instance, then
\begin{align*}
\lim_{n\rightarrow\infty}\frac{\mathbf{E}[N^3]}{\mathbf{E}[N]^3}=1.
\end{align*}
\end{corollary}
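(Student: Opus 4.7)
The plan is to derive Corollary~\ref{thm:thirdmoment} from Theorem~\ref{expectation} by peeling off hidden assignments from the third moment one at a time. Expanding $N$ as a sum of indicators and conditioning on two of the three assignments gives
\begin{align*}
\mathbf{E}[N^{3}]=\sum_{\sigma,\tau}\Pr[\sigma,\tau\text{ are solutions}]\cdot\mathbf{E}[N\mid\sigma,\tau\text{ are solutions}].
\end{align*}
The inner factor is, by definition, the expected number of solutions of a forced RB instance with the pair $(\sigma,\tau)$ planted. By value-symmetry of the RB model this inner expectation depends on $(\sigma,\tau)$ only through the pattern of variables on which $\sigma$ and $\tau$ agree, and Theorem~\ref{expectation}---together with the localisation argument in its proof, which pins the dominant contribution to a narrow neighbourhood of a single overlap profile whose leading value does not feel the specific choice of $(\sigma,\tau)$---supplies the asymptotic identity $\mathbf{E}[N\mid\sigma,\tau\text{ are solutions}]\sim\mathbf{E}[N]$ uniformly on the pairs that carry the bulk of $\mathbf{E}[N^{2}]=\sum_{\sigma,\tau}\Pr[\sigma,\tau\text{ are solutions}]$.

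With this pointwise estimate in hand I obtain $\mathbf{E}[N^{3}]\sim\mathbf{E}[N]\cdot\mathbf{E}[N^{2}]$, after which only the second-moment bound $\mathbf{E}[N^{2}]\sim\mathbf{E}[N]^{2}$ remains. This is the one-hidden-solution analogue: the same peeling gives
\begin{align*}
\mathbf{E}[N^{2}]=\sum_{\sigma}\Pr[\sigma\text{ is a solution}]\cdot\mathbf{E}[N\mid\sigma\text{ is a solution}]=\mathbf{E}[N]\cdot\mathbf{E}_{f}[N^{*}],
\end{align*}
where the second equality uses value-symmetry to pull the $\sigma$-independent inner factor out of the sum, and $\mathbf{E}_{f}[N^{*}]$ denotes the expected number of solutions in the one-hidden-solution space $\mathbb{S}^{*}$. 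The relation $\mathbf{E}_{f}[N^{*}]\sim\mathbf{E}[N]$ is already established in \cite{xu2006,xu2007}, and is also recovered by specialising the proof of Theorem~\ref{expectation} to a single planted assignment; combining the two steps then yields $\mathbf{E}[N^{3}]\sim\mathbf{E}[N]\cdot\mathbf{E}[N]^{2}=\mathbf{E}[N]^{3}$.

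The main obstacle is precisely the bridge between the averaged statement of Theorem~\ref{expectation} and the pointwise form used above: the theorem concerns the average of $\mathbf{E}[N\mid\sigma,\tau\text{ are solutions}]$ under a uniform sampling of pairs, whereas the third-moment identity weights the pairs by $\Pr[\sigma,\tau\text{ are solutions}]$. To close this gap I would use the fact that both weightings concentrate on pairs whose Hamming agreement lies near the typical value $n/d$, where the leading asymptotic of $\mathbf{E}[N\mid\sigma,\tau\text{ are solutions}]$ is insensitive to $(\sigma,\tau)$, and combine this with the crude bound $\mathbf{E}[N\mid\sigma,\tau\text{ are solutions}]\le d^{n}$ on atypical pairs, whose total $\Pr[\sigma,\tau\text{ are solutions}]$-mass is already negligible by the very proof of $\mathbf{E}[N^{2}]\sim\mathbf{E}[N]^{2}$.
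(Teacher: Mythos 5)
Your proposal is correct and follows essentially the same route as the paper: conditioning on two of the three assignments gives $\mathbf{E}[N^3]=\mathbf{E}_f[N^{**}]\,\mathbf{E}[N^2]$, after which Theorem \ref{expectation} and the known second-moment result $\mathbf{E}[N^2]\sim\mathbf{E}[N]^2$ from \cite{xu2000} finish the argument. The ``bridge'' you worry about at the end is not really an obstacle here, since the proof of Theorem \ref{expectation} fixes an arbitrary pair $(\sigma,\tau)$ and its final bound $(1-1/d)^{S_0}+o(1)\le 1+o(1)$ is uniform over the similarity number $S_0$, so the inner conditional expectation can be pulled out of the sum directly.
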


\begin{theorem}\label{thm:hard}
In the satisfiable phase of Model RB with all parameters being fixed, an algorithm can solve a randomly generated forced RB instance with one or two arbitrary hidden solutions with probability $P$, if and only if it can solve a randomly generated unforced RB instance with probability $P$.
\end{theorem}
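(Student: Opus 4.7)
The plan is to reduce the theorem to Corollary \ref{distribution} via a conditional decomposition of the algorithm's success probability with respect to the number of solutions. The key observation I would exploit is that once the number of solutions $t$ is fixed, the three ensembles $\mathbb{S}$, $\mathbb{S}^{*}$ and $\mathbb{S}^{**}$ become indistinguishable; all the discrepancy between them is concentrated in the marginal distribution of $t$, and that marginal is shown by Corollary \ref{distribution} to be asymptotically invariant across the three spaces.

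More concretely, my first step would be to verify that for every fixed integer $t\ge 1$, the conditional law of the instance given it has exactly $t$ solutions is the same under $\mathcal{P}$, $\mathcal{P}^{*}$ and $\mathcal{P}^{**}$. This follows because the forced generation procedure weights each candidate instance $I$ by a factor depending only on $|\mathrm{sols}(I)|$ (by a factor proportional to $|\mathrm{sols}(I)|$ in $\mathbb{S}^{*}$ and to $|\mathrm{sols}(I)|(|\mathrm{sols}(I)|-1)$ in $\mathbb{S}^{**}$, since any solution, respectively any ordered pair of distinct solutions, of $I$ may play the role of the planted assignment(s)). Once we condition on $|\mathrm{sols}(I)|=t$, this weighting factor is a constant and cancels after renormalisation. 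Therefore, for any (deterministic or randomised) algorithm $A$, the conditional success probability
\begin{align*}
\bar{p}_t \;:=\; \mathbf{Pr}[A \text{ succeeds on a random instance} \mid N=t]
\end{align*}
is the same in the three ensembles, and the overall success probabilities admit the parallel decompositions
\begin{align*}
P \;=\; \sum_{t\ge 1}\mathcal{P}_t\,\bar{p}_t,\qquad P^{*} \;=\; \sum_{t\ge 1}\mathcal{P}_t^{*}\,\bar{p}_t,\qquad P^{**} \;=\; \sum_{t\ge 1}\mathcal{P}_t^{**}\,\bar{p}_t.
\end{align*}

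The last step would combine this decomposition with Corollary \ref{distribution}: since $\mathcal{P}_t^{**}\sim\mathcal{P}_t^{*}\sim\mathcal{P}_t$ and $0\le\bar{p}_t\le 1$, the three sums are asymptotically equal, giving $P\sim P^{*}\sim P^{**}$ and thus the ``if and only if'' assertion (in both directions simultaneously, since it is encoded in the single identity $P^{*}\sim P\sim P^{**}$). The main obstacle I anticipate is making the interchange of the $n\to\infty$ limit with the unbounded sum over $t$ rigorous, because Corollary \ref{distribution} is a pointwise statement while $t$ may range up to values comparable to $\mathbf{E}[N]$, which itself grows with $n$. I would address this with a tail-truncation argument: invoke Theorem \ref{expectation} together with Markov's inequality to show that the mass assigned by $\mathcal{P}_t$, $\mathcal{P}_t^{*}$ and $\mathcal{P}_t^{**}$ to the regime $t>T$ is uniformly (in $n$) small once $T=T(\varepsilon,n)$ is chosen appropriately, then apply Corollary \ref{distribution} termwise to the finitely many remaining indices. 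This bounds $|P^{*}-P|$ and $|P^{**}-P|$ by an arbitrary $\varepsilon$ for $n$ large, completing the argument.
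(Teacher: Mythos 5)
Your proposal follows essentially the same route as the paper: both arguments rest on Corollary \ref{distribution} (together with Theorem \ref{expectation}), transferring the algorithm's success probability between ensembles via the asymptotic identity of the distributions $\mathcal{P}_t,\mathcal{P}_t^{*},\mathcal{P}_t^{**}$. Your version is in fact considerably more explicit than the paper's own proof, which simply asserts the conclusion from Theorem \ref{expectation} and Corollaries \ref{distribution} and \ref{thm:thirdmoment} without writing down the conditional decomposition $P=\sum_t\mathcal{P}_t\,\bar p_t$ or confronting the interchange of the limit $n\to\infty$ with the unbounded sum over $t$.

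One caveat on your intermediate claim that the conditional law of $I$ given $N(I)=t$ is literally identical in the three spaces. This is exact for $\mathbb{S}^{*}$: the planted weight is proportional to $N(I)$, which is constant on the event $N=t$. For $\mathbb{S}^{**}$, however, under the paper's generation procedure the weight of $I$ is $\sum_{\sigma,\tau\models I}w(\sigma,\tau)$, where $w$ depends on the number of constraint scopes on which $\sigma$ and $\tau$ assign the same tuple (one forced permitted tuple versus two), hence on the overlap structure of the solution set of $I$ and not only on $t$; and if the planted pair is held fixed, as in the paper's definition of $\mathcal{P}_t^{**}$, the conditional law given $N=t$ is supported only on instances satisfied by that pair. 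So the exact cancellation you invoke requires either an averaging-plus-symmetry argument over the random planted pair or an estimate showing that the overlap-dependent weights are asymptotically constant (typical solution pairs agree on $o(1)$ of the scopes). The paper does not address this point either, so your proposal is, if anything, closer to a complete proof than the one given.
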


\section{Proof of Theorem \ref{expectation}}\label{proof1}
Suppose $\sigma,\tau$ are two fixed assignments with similarity number $S_0$, i.e., $S_0=|\mathcal{S}_0|=|\{x\in X: \sigma(x)=\tau(x)\}|$.
The RB instances we consider in the following are from the space $\mathbb{S}^{**}$, in which instances are generated subject to the condition that both $\sigma$ and $\tau$ are solutions. From \cite{xu2000}, the expected number of solutions for unforced RB instances is
\begin{align*}\label{eqn:en}
\mathbf{E}[N]=d^n(1-p)^{rn\ln n},
\end{align*}
and the expected number of satisfying assignment pairs with similarity number $S_0$ is
\begin{equation*}
A_{S_0}\mathbf{Pr(<\sigma,\tau>)},
\end{equation*}
 where
\begin{equation*}\label{eqn:as0}
A_{S_0}=d^n(d-1)^{n-S_0}\binom n{S_0},\mathbf{Pr}(<\sigma,\tau>)=\left[\frac{\binom{d^k-1}{q}}{\binom{d^k}{q}}\frac{\binom Sk}{\binom nk}+\frac{\binom{d^k-2}{q}}{\binom{d^k}{q}}\left(1-\frac{\binom Sk}{\binom nk}\right)\right]^{rn\ln n}.
\end{equation*}

For a random instance $I\in\mathbb{S}^{**}$, let $\omega$ be a random assignment. Assume that the similarity sets of variables assigned by $\omega$ with $\sigma,\tau$ are:
\begin{align*}
&\mathcal{S}_1=\{x\in X: \sigma(x)=\omega(x)\},S_1=|\mathcal{S}_1|,\\
&\mathcal{S}_2=\{x\in X: \tau(x)=\omega(x)\},S_2=|\mathcal{S}_2|,\\
&\mathcal{S}=\{x\in X: \sigma(x)=\tau(x)=\omega(x)\},S=|\mathcal{S}|.
\end{align*}
To simplify the calculation in the subsequent sections, it is sometimes convenient to consider the sets $\mathcal{M}_1=\mathcal{S}_1/\mathcal{S},\mathcal{M}_2=\mathcal{S}_2/\mathcal{S}$ with $M_1=|\mathcal{M}_1|=S_1-S,M_2=|\mathcal{M}_2|=S_2-S$. Given $\sigma,\tau$, the number of such $\omega$ equals

\begin{equation*}
A_{S_0,S_1,S_2,S}=d^n(d-1)^{n-S_0}(d-2)^{n-(S_1+S_2-S)}\binom{n}{S_0-S,S_1-S,S_2-S,S,n-S_0-S_1-S_2+2S}.
\end{equation*}

For a random constraint $C$,

(1) The probability of $<\sigma,\tau,\omega>$ satisfying a random constraint $C$,
in which each variable is assigned the same values by $\sigma,\tau,\omega$ (we denote by $\sigma(C)=\tau(C)=\omega(C)$), is
\begin{equation*}
\frac{\binom Sk}{\binom nk}\frac{\binom{d^k-1}{q}}{\binom{d^k}{q}}.
\end{equation*}

(2) The probability of $<\sigma,\tau,\omega>$ satisfying a random constraint $C$,
in which each variable is assigned the same values by two of $\sigma,\tau,\omega$, while differently by the third assignment, is
\begin{equation*}
\left(\frac{\binom{S_0}k-\binom Sk}{\binom nk}+\frac{\binom{S_1}k-\binom Sk}{\binom nk}+\frac{\binom{S_2}k-\binom Sk}{\binom nk}\right)\frac{\binom{d^k-2}{q}}{\binom{d^k}{q}}.
\end{equation*}

Indeed, suppose  $\sigma(C)=\tau(C)\ne\omega(C)$), the probability of $C$ falling into such a case is $\frac{\binom{S_0}k-\binom Sk}{\binom nk}$, and the probability of $<\sigma,\tau,\omega>$ satisfying such $C$ is
$\frac{\binom{d^k-2}{q}}{\binom{d^k}{q}}.$
Similar results hold for the cases $\sigma(C)=\omega(C)\ne\tau(C)$ and $\omega(C)=\tau(C)\ne\sigma(C)$.

(3) The probability of $<\sigma,\tau,\omega>$ satisfying a random constraint $C$ which  falls into the remaining cases is
\begin{equation*}
\left(1-\frac{\binom{S_1}k+\binom{S_2}k+\binom{S_0}k-2\binom Sk}{\binom nk}\right)
\frac{\binom{d^k-3}{q}}{\binom{d^k}{q}}.
\end{equation*}
Consequently, the probability of $<\sigma,\tau,\omega>$, which has the above overlaps, satisfying a random constraint $C$ is

\begin{align*}
&\mathbf{Pr}(<\sigma,\tau,\omega>)\\
=&\frac{\binom Sk}{\binom nk}\cdot\frac{\binom{d^k-1}{q}}{\binom{d^k}{q}}+\frac{\binom{S_1}k+\binom{S_2}k+\binom{S_0}k-3\binom Sk}{\binom nk}\cdot\frac{\binom{d^k-2}{q}}{\binom{d^k}{q}}+\left(1-\frac{\binom{S_1}k+\binom{S_2}k+\binom{S_0}k-2\binom Sk}{\binom nk}\right)\frac{\binom{d^k-3}{q}}{\binom{d^k}{q}}.
\end{align*}

The expected number of solutions of a random instance $I\in\mathbb{S}^{**}$ with two hidden assignments $\sigma,\tau$ of similarity number $S_0$ equals
\begin{align*}
\mathbf{E}_{f}[N^{**}]&=\sum_{i=1}^{d^n}\mathbf{Pr}[\omega_i\text{ is a satisfying assignment }|\sigma,\tau\text{ are satisfying assignments}]\\
&=\sum_{i=1}^{d^n}\frac{\mathbf{Pr}[\sigma,\tau,\omega_i\text{ are satisfying assignments}]}{\mathbf{Pr[\sigma,\tau\text{ are satisfying assignments}]}}\\
&=\sum_{(S_1,S_2,S)\in\mathfrak{S}}\frac{A_{S_0,S_1,S_2,S}\mathbf{Pr}(<\sigma,\tau,\omega>)^{rn\ln n}}{A_{S_0}\mathbf{Pr}(<\sigma,\tau>)^{rn\ln n}},
\end{align*}
where $\mathfrak{S}=\{(S_1,S_2,S):0\le S_1,S_2\le n,0\le S\le\min\{S_0,S_1,S_2\},S_1+S_2-2S\le n-S_0\}$, and this set is equivalent to $\mathfrak{M}=\{(M_1,M_2,S):0\le M_1+S,M_2+S\le n,0\le S\le S_0,M_1+M_2\le n-S_0\}$.

The following asymptotic analyses will be needed.
\begin{equation*}
\frac{\binom{d^k-1}{q}}{\binom{d^k}{q}}=1-p,\frac{\binom{d^k-2}{q}}{\binom{d^k}{q}}=(1-p)^2+O(d^{-k}),\frac{\binom{d^k-3}{q}}{\binom{d^k}{q}}=(1-p)^3+O(d^{-k}).
\end{equation*}

Let $S=ns$, then
\begin{equation*}
\frac{\binom Sk}{\binom nk}=s^k+\frac{g(s)}n+O\left(\frac1{n^2}\right),\text{ where }g(s)=\frac{k(k-1)(s^k-s^{k-1})}2.
\end{equation*}

Combining the above, we get
\begin{equation}\label{eqn:main}
\frac{\mathbf{E}_{f}[N^{**}]}{\mathbf{E}[N]}\le\left(1+O\left(\frac1n\right)\right)\sum_{(S_1,S_2,S)\in\mathfrak{S}}\Psi(S_1,S_2,S),
\end{equation}
with $S_1=ns_1,S_2=ns_2,S=ns$, and
\begin{align*}
\Psi(S_1,S_2,S)=&
\frac{\binom{n}{S_0-S,S_1-S,S_2-S,S,n-S_0-S_1-S_2+2S}}{\binom{n}{S_0}}\frac{(d-2)^{n-(S_1+S_2-S)}}{d^n}\left[1+\frac{p(s_1^k+s_2^k)+\frac{2p^2-p}{1-p}s^k}{1-p+ps_0^k}\right]^{rn\ln n}\\
=&\exp\{n\psi(s_1,s_2,s)\},\\
\psi(s_1,s_2,s)=&-(s_1-s)\ln (s_1-s)-(s_2-s)\ln (s_2-s)-(s_0-s)\ln(s_0-s)-s\ln s+s_0\ln s_0\\
&+(1-s_0)\ln(1-s_0)-(1-s_1-s_2-s_0+2s)\ln(1-s_1-s_2-s_0+2s)
-\ln d\\
&+(1-s_1-s_2+s)\ln(d-2)+r\ln n\ln\left[1+\frac{ps_1^k+ps_2^k+(2p^2-p)s^k/(1-p)}{1-p+ps_0^k}\right],
\end{align*}
where $\psi(s_1,s_2,s)$ is asymptotically obtained by applying Stirling's formula for the factorial.

To prove Theorem \ref{expectation}, we consider two cases of $\alpha\ge1$ and $\alpha<1$. The crucial point is to decompose the sum (\ref{eqn:main})  into several parts which are tractable by fairly and delicately considering the sizes of overlap sets of the three assignments, and show that $\frac{\mathbf{E}_{f}[N^{**}]}{\mathbf{E}[N]}\le1+o(1)$. Indeed, by H$\ddot{o}$lder inequality, $\mathbf{E}[N]=\mathbf{E}[N1_{N>0}]\le(\mathbf{E}[N^3])^\frac13\mathbf{P}(N>0)^\frac23$, thus $\mathbf{E}[N]^3\le\mathbf{E}[N^3]\mathbf{P}(N>0)^2$. Also, the proof of Theorem \ref{thm:thirdmoment} which can be found in Section \ref{sec:th3} shows that $\mathbf{E}[N^3]=\mathbf{E}[N^2]\mathbf{E}_{f}[N^{**}]\approx\mathbf{E}[N]^2\mathbf{E}_{f}[N^{**}]$.
Therefore, $\mathbf{E}_{f}[N^{**}]\ge\mathbf{E}[N]$ holds naturally. Consequently, it suffices to show $\frac{\mathbf{E}_{f}[N^{**}]}{\mathbf{E}[N]}\le1+o(1)$.

As preparations, we state the following propositions which will be important tools in our following analysis.
\begin{proposition}\label{prop:convex}
Assume that $f(x,y,z)$ is a function on $x\in[a_1,a_2],y\in[b_1,b_2],z\in[c_1,c_2]$. If $\frac{\partial^2}{\partial x^2}f\ge0,\frac{\partial^2}{\partial y^2}f\ge0,\frac{\partial^2}{\partial z^2}f\ge0$, and $f(x,y,z)\le0$ on the boundary points. Then for $\forall x\in[a_1,a_2],y\in[b_1,b_2],z\in[c_1,c_2]$,
\begin{align*}
&f(x,y,z)\\
\le&\max\{f(a_1,b_1,c_1),f(a_1,b_1,c_2),f(a_1,b_2,c_1),f(a_1,b_2,c_2),\\
&f(a_2,b_1,c_1),f(a_2,b_1,c_2),f(a_2,b_2,c_1),f(a_2,b_2,c_2)\}\\
\le&0.
\end{align*}
\end{proposition}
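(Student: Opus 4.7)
The proof is a direct iterated application of the one-dimensional maximum principle for convex functions; no new machinery is needed. The plan has three essentially identical reduction steps followed by an invocation of the boundary hypothesis.

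First I would freeze $y \in [b_1,b_2]$ and $z \in [c_1,c_2]$. Since $\partial^2 f/\partial x^2 \ge 0$ throughout the box, the restriction $x \mapsto f(x,y,z)$ is convex on $[a_1,a_2]$, and a convex function on a closed interval attains its maximum at an endpoint. This yields
$f(x,y,z) \le \max\{f(a_1,y,z),\, f(a_2,y,z)\}$.

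Next, for each $i \in \{1,2\}$, I would freeze $z$ and apply the same observation to the slice $y \mapsto f(a_i,y,z)$, using $\partial^2 f/\partial y^2 \ge 0$; composing with the previous step gives the four-term bound $f(x,y,z) \le \max_{i,j \in \{1,2\}} f(a_i,b_j,z)$. Applying the argument once more in the $z$-direction via $\partial^2 f/\partial z^2 \ge 0$ reduces the control of $f(x,y,z)$ to the eight vertex values $f(a_i,b_j,c_\ell)$, which is the first inequality in the conclusion. Since each vertex is a boundary point of the box, the hypothesis $f \le 0$ on the boundary then supplies the second inequality.

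The main obstacle, if one can call it that, is purely notational: only separate convexity in each coordinate is assumed (not joint convexity), so the reduction must be performed one variable at a time, and one must verify that freezing the other two coordinates at arbitrary interior points still preserves the one-dimensional convexity. This is immediate from the pointwise sign of each pure second partial derivative, so no quantitative estimates or auxiliary lemmas are required to complete the argument.
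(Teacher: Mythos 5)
Your proof is correct and is essentially the same as the paper's: both arguments iterate the fact that a coordinate slice with nonnegative pure second partial is convex and hence maximized at an endpoint, reducing the bound to the eight vertices, where the boundary hypothesis applies. The only difference is cosmetic — the paper propagates from the vertices outward (first in $z$, then $y$, then $x$) while you reduce from an arbitrary point inward — so nothing further is needed.
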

\begin{proof}
Due to the condition $\frac{\partial^2}{\partial z^2}f\ge0$, we observe that $f$ is convex with respect to $z$. Since on the boundary points $f(a_1,b_1,c_1)\le0$, $f(a_1,b_1,c_2)\le0$, thus for $\forall z\in[c_1,c_2]$,
$$f(a_1,b_1,z)\le\max\{f(a_1,b_1,c_1),f(a_1,b_1,c_2)\}\le0.$$
Similarly we see that for $\forall z\in[c_1,c_2]$,
 $$f(a_1,b_2,z)\le\max\{f(a_1,b_2,c_1),f(a_1,b_2,c_2)\}\le0.$$

Note that $\frac{\partial^2}{\partial y^2}f\ge0$, thus $f$ is convex with respect to $y$. Therefore similar arguments yield that, for $\forall y\in[b_1,b_2],z\in[c_1,c_2]$,
\begin{align*}
f(a_1,y,z)\le&\max\{f(a_1,b_1,c_1),f(a_1,b_1,c_2),f(a_1,b_2,c_1),f(a_1,b_2,c_2)\}\le0,\\
f(a_2,y,z)\le&\max\{f(a_2,b_1,c_1),f(a_2,b_1,c_2),f(a_2,b_2,c_1),f(a_2,b_2,c_2)\}\le0.
\end{align*}
Finally, the assertion follows analogous arguments since$f$ is also convex with respect to $x$.
\end{proof}

\begin{proposition}\label{prop:multi}
For real numbers $x_1,x_2,...,x_m$ and non-negative integers $n,r_1,r_2,...,r_m,t$ with $t\le n$, then
\begin{align*}
\sum\binom{n}{r_1,r_2,r_3,...,r_m}x_1^{r_1}x_2^{r_2}x_3^{r_3}...x_m^{r_m}=\binom{n}{t}(x_1+x_2)^t(x_3+...+x_m)^{n-t},
\end{align*}
where the sum is over all combinations of $r_1,...,r_m$ such that $r_1+r_2=t$ and $r_1+r_2+...r_m=n$.
\end{proposition}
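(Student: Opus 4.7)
The plan is to factorize the multinomial coefficient into a product that respects the partition induced by the constraint $r_1+r_2=t$, and then apply the binomial and multinomial theorems to two decoupled sums. Specifically, I would begin by observing that whenever $r_1+r_2=t$ and $r_1+\dots+r_m=n$, the multinomial coefficient splits as
\begin{align*}
\binom{n}{r_1,r_2,r_3,\dots,r_m}=\binom{n}{t}\binom{t}{r_1,r_2}\binom{n-t}{r_3,\dots,r_m},
\end{align*}
which is a direct consequence of rewriting each side in terms of factorials. This is the single algebraic step that makes the entire identity collapse to standard tools.

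Next I would substitute this factorization into the left-hand side and use the fact that, with $t$ fixed, the indices $(r_1,r_2)$ with $r_1+r_2=t$ and the indices $(r_3,\dots,r_m)$ with $r_3+\dots+r_m=n-t$ vary independently. This lets me rewrite the sum as
\begin{align*}
\binom{n}{t}\Bigl(\sum_{r_1+r_2=t}\binom{t}{r_1,r_2}x_1^{r_1}x_2^{r_2}\Bigr)\Bigl(\sum_{r_3+\dots+r_m=n-t}\binom{n-t}{r_3,\dots,r_m}x_3^{r_3}\cdots x_m^{r_m}\Bigr).
\end{align*}
The first inner sum is $(x_1+x_2)^t$ by the binomial theorem, and the second inner sum is $(x_3+\dots+x_m)^{n-t}$ by the standard multinomial theorem (which itself follows by a routine induction on $m$ from the binomial theorem). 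Multiplying these three factors gives exactly the right-hand side of the claimed identity.

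There is no substantive obstacle here; the proposition is essentially a two-level grouping of the multinomial theorem, and the only point requiring minimal care is the verification that, under the constraint $r_1+r_2=t$, the two groups of summation indices truly decouple so that the product of sums equals the sum of products. Once that decoupling is observed, the rest is an immediate application of two classical identities.
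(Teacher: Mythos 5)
Your proposal is correct and follows essentially the same route as the paper: the factorization $\binom{n}{r_1,\dots,r_m}=\binom{n}{t}\binom{t}{r_1,r_2}\binom{n-t}{r_3,\dots,r_m}$ is exactly the factorial regrouping the paper performs, after which both arguments decouple the sum and invoke the binomial and multinomial theorems.
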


\begin{proof}
Note that $r_1+r_2=t$ and $r_3+...r_m=n-t$ are fixed, thus
\begin{align*}
&\sum_{r_1,...,r_m}\binom{n}{r_1,r_2,r_3,...,r_m}x_1^{r_1}x_2^{r_2}x_3^{r_3}...x_m^{r_m}\\
=&\sum_{r_1,...,r_m}\frac{n!}{r_1!(t-r_1)!r_3!...r_m!}x_1^{r_1}x_2^{t-r_1}x_3^{r_3}...x_m^{r_m}\\
=&\sum_{r_1,...,r_m}\frac{n!}{t!(n-t)!}\frac{t!}{r_1!(t-r_1)!}\frac{(n-t)!}{r_3!...r_m!}x_1^{r_1}x_2^{t-r_1}x_3^{r_3}...x_m^{r_m}\\
=&\binom{n}{t}\sum_{r_1=0}^t\frac{t!}{r_1!(t-r_1)!}x_1^{r_1}x_2^{t-r_1}\sum_{r_3,...,r_m}\frac{(n-t)!}{r_3!...r_m!}x_3^{r_3}...x_m^{r_m}\\
=&\binom{n}{t}(x_1+x_2)^t(x_3+...+x_m)^{n-t}.
\end{align*}

\end{proof}

\subsection{$\alpha\ge1$}

To evaluate the sum (\ref{eqn:main}), we first observe that

\begin{align*}
\begin{split}
\frac{\partial}{\partial s_1}\psi(s_1,s_2,s)=&-\ln (s_1-s)+\ln(1-s_1-s_2-s_0+2s)
-\ln(d-2)\\
&+\frac{rpks_1^{k-1}\ln n}{1-p+p(s_0^k+s_1^k+s_2^k)+(2p^2-p)s^k/(1-p)}\\
\le&-\ln (s_1-s)+\left(-\alpha+\frac{rpk}{1-p}s_1^{k-1}\right)\ln n.
\end{split}
\end{align*}
It is easy to see that $\frac{\partial}{\partial s_1}\psi(s_1,s_2,s)<0$ for $s_1\in[0,\epsilon]$, where $\epsilon>0$ is a sufficiently small constant. By symmetry of $s_1$ and $s_2$, $\psi(s_1,s_2,s)$ is decreasing with respect to both $s_1$ and $s_2$ around $s_1=s_2=0$. On the other aspect, note that $d=n^\alpha$ and $\alpha\ge1$, thus if $s_1,s_2$ are away from $0$, then $\psi(s_1,s_2,s)$ is dominated by $h(s_1,s_2,s)\ln n$, where
\begin{align}\label{hfunction}
\begin{split}
h(s_1,s_2,s)=-(s_1+s_2-s)\ln d+r\ln n\ln\left[1+\frac{ps_1^k+ps_2^k+(2p^2-p)s^k/(1-p)}{1-p+ps_0^k}\right],
\end{split}
\end{align}
and $0\le s_1,s_2\le 1,0\le s\le\min\{s_1,s_2,s_0\},s_1+s_2-2s\le1-s_0$.

A straightforward calculation yields that
\begin{align*}
\begin{split}
&\frac{\partial^2}{\partial s_1^2}h(s_1,s_2,s)\\
=&rkp\ln n\frac{((1-p)(k-1)-ps_1^k)+p(k-1)(s_0^k+s_2^k)+(2p^2-p)(k-1)s^k/(1-p)}{(1-p+p(s_0^k+s_1^k+s_2^k)+(2p^2-p)s^k/(1-p))^2}s_1^{k-2}.
\end{split}
\end{align*}
Note that $k\ge\frac1{1-p}$ entails that $(1-p)(k-1)-ps_1^k\ge0$, thus $\frac{\partial^2}{\partial s_1^2}h(s_1,s_2,s)\ge0$ on the interval $s_1\in[0,1]$. Similarly, $\frac{\partial^2}{\partial s_2^2}h(s_1,s_2,s)\ge0$ on the interval $s_2\in[0,1]$, and $\frac{\partial^2}{\partial s^2}h(s_1,s_2,s)\ge0$ on the interval $s\in[0,\min\{s_1,s_2,s_0\}]$.

Moreover, on the boundary points we have
\begin{align*}
\begin{split}
h(0,0,0)&=0,\\
h(1,0,0)&=h_1(0,1,0)\le-\ln d+r\ln n\ln\left[1+\frac{p}{1-p}\right]=(-\alpha-r\ln(1-p))\ln n<0,\\
h(1,s_0,s_0)&=h_1(s_0,1,s_0)=(-\alpha-r\ln(1-p))\ln n<0,\\
\end{split}
\end{align*}
therefore the unique maximum point of $h(s_1,s_2,s)$ is $s_1=s_2=s=0$.

 Crucially, we now conclude that the contribution to (\ref{expect1}) comes from the terms near $s_1=0$ and $s_2=0$, say $s_1,s_2\in[0,1/\sqrt n]$. In addition, the sum over $s_1,s_2$ outside the interval $s_1,s_2\in[0,1/\sqrt n]$ is negligible. In fact, by applying Proposition \ref{prop:convex} we know that if $s_1\ge1/\sqrt n$ or $s_2\ge1/\sqrt n$, then $h(s_1,s_2,s)\le-\Theta(\frac1{\sqrt n})$. Consequently,
\begin{align*}
&\sum_{ S_1\ge\sqrt n,(S_1,S_2,S)\in\mathfrak{S}}\Psi(S_1,S_2,S)+\sum_{S_2\ge \sqrt n,(S_1,S_2,S)\in\mathfrak{S}}\Psi(S_1,S_2,S)\\
\le&2\sum_{ S_1\ge\sqrt n,(S_1,S_2,S)\in\mathfrak{S}}\exp\{-\Theta(\sqrt n\ln n)\}\\
=&o(1).
\end{align*}

Finally, for $s_1,s_2\in[0,1/\sqrt n]$, it is straightforward that
\begin{align*}
\left[1+\frac{p(s_1^k+s_2^k)+\frac{2p^2-p}{1-p}s^k}{1-p+ps_0^k}\right]^{rn\ln n}=1+o(1).
\end{align*}
Furthermore, easy calculation yields that
\begin{align}\label{expansion}
\begin{split}
\frac{(d-2)^{n-(S_1+S_2-S)}}{d^n}=\left(1-\frac2d\right)^{S_0-S}\left(\frac1d\right)^{S_1-S}
\left(\frac1d\right)^{S_2-S}\left(\frac1d\right)^{S}\left(1-\frac2d\right)^{n-S_0-S_1-S_2+2S}.
\end{split}
\end{align}

To proceed, note that $(S_0-S)+S=S_0$ is fixed, hence apply Proposition \ref{prop:multi} yields
\begin{align*}
\sum_{0\le S_1,S_2\le \sqrt n,(S_1,S_2,S)\in\mathfrak{S}}&\binom{n}{S_0-S,S_1-S,S_2-S,S,n-S_0-S_1-S_2+2S}
\left(1-\frac2d\right)^{S_0-S}\left(\frac1d\right)^{S}\cdot\\
&\left(\frac1d\right)^{S_1-S}\left(\frac1d\right)^{S_2-S}\left(1-\frac2d\right)^{n-S_0-S_1-S_2+2S}\\
\le&\binom{n}{S_0}\left(1-\frac1d\right)^{S_0}.
\end{align*}
Thus
\begin{align*}\label{expect1}
\begin{split}
&\frac{\mathbf{E}_{f}[N^{**}]}{\mathbf{E}[N]}\le\left(1-\frac1d\right)^{S_0}+o(1)=1+o(1).
\end{split}
\end{align*}
\subsection{$\alpha<1$}

In this section we prove Theorem \ref{expectation} under the condition that $\alpha<1$. The proof of this case is relatively subtle. In fact, for $\alpha\ge1$, $\Psi(S_1,S_2,S)$ attains its global maximum at $s_1=s_2=0$; while if $\alpha<1$, $\psi(S_1,S_2,S)$ is increasing on $S_1-S,S_2-S\in[0,n^{1-\alpha}]$, and then the monotonicity becomes tricky around $n^{1-\alpha}$, which we will deal with delicately later. Due to the fact that the monotonicity involves $S_1-S$ and $S_2-S$, so the use of parameters $S_1,S_2,S$ will lead to unpleasant calculations. To simplify the calculation, sometimes it will be convenient to replace $S_1,S_2$ by the overlap parameters $M_1,M_2$ we have introduced before  (recall that $M_1=S_1-S,M_2=S_2-S$). We will denote
\begin{align*}
\begin{split}
&W(M_1,M_2,S)=\Psi(S_1,S_2,S),\\
&F(m_1,m_2,s)+G(m_1,m_2,s)=\psi(s_1,s_2,s),
\end{split}
\end{align*}
where $M_1=nm_1,M_2=nm_2$, and
\begin{align}
\begin{split}
F(m_1,m_2,s)=&-m_1\ln m_1-m_2\ln m_2-(s_0-s)\ln(s_0-s)-s\ln s+(1-m_1-m_2-s)\ln(d-2)\\
&-(1-m_1-m_2-s_0)\ln(1-m_1-m_2-s_0)-\ln d+s_0\ln s_0+(1-s_0)\ln(1-s_0),\\
G(m_1,m_2,s)=&r\ln n\ln\left[1+\frac{p(m_1+s)^k+p(m_2+s)^k+\frac{2p^2-p}{1-p}s^k}{1-p+ps_0^k}\right].
\end{split}
\end{align}

We begin with the following estimate which is similar to the case of $\alpha\ge1$.
\begin{lemma}\label{EpsilonTo1}
Suppose $\epsilon>0$ is a sufficiently small constant, then
\begin{equation*}
\sum_{S_1,S_2\ge\epsilon n,(S_1,S_2,S)\in\mathfrak{S}}\Psi(S_1,S_2,S)=o(1).
\end{equation*}
\end{lemma}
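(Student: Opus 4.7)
My strategy is to establish that on the region $\{s_1, s_2 \geq \epsilon\} \cap \mathfrak{S}$ the exponent $\psi(s_1, s_2, s)$ satisfies $\psi \leq -c(\epsilon) \ln n$ for some constant $c(\epsilon) > 0$ and all sufficiently large $n$. Since there are at most $O(n^3)$ admissible lattice triples $(S_1, S_2, S)$, this will give a total bound $\sum \Psi(S_1, S_2, S) \leq n^{3 - c(\epsilon) n} = o(1)$, which is the desired conclusion.

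The first step is to isolate the leading $\Theta(\ln n)$ contribution to $\psi$. The entropy-type terms appearing in $F(m_1, m_2, s)$, each of the form $-x \ln x$ with $x \in [0,1]$, are uniformly $O(1)$, while the piece $-\ln d + (1 - s_1 - s_2 + s) \ln(d-2)$ contributes $-(s_1 + s_2 - s)\, \alpha \ln n + O(1)$. Combined with $G$, this yields $\psi(s_1, s_2, s) = h(s_1, s_2, s) + O(1)$, where $h$ is the function defined in (\ref{hfunction}). Hence it suffices to show $h \leq -c(\epsilon) \ln n$ on the restricted region.

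The second step uses the convexity of $h$ in each of $s_1, s_2, s$ already established in the preceding subsection (from $k \geq 1/(1-p)$), together with Proposition \ref{prop:convex}, to reduce the uniform upper bound on $h$ to checking finitely many extreme vertices of the polytope $\{(s_1, s_2, s) \in \mathfrak{S} : s_1, s_2 \geq \epsilon\}$. The critical vertex is $(\epsilon, \epsilon, s_*)$ with $s_* \in \{0, \min(\epsilon, s_0)\}$, at which a direct computation yields $h \leq -\epsilon \alpha \ln n + O(\epsilon^k \ln n) \leq -\frac{1}{2}\epsilon \alpha \ln n$ for $\epsilon$ sufficiently small (using $k \geq 2$). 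The remaining extreme vertices lie on the constraint face $s_1 + s_2 - 2s = 1 - s_0$ and take the shape $(1, s_0, s_0)$, $(s_0, 1, s_0)$, or (when $s_0 = 1$) $(1, 1, 1)$; at each of these the calculation parallels the one carried out in the $\alpha \geq 1$ case and gives $h = (-\alpha - r \ln(1-p)) \ln n < 0$, where the strict inequality is exactly the hypothesis $r < r_c$. Candidate vertices such as $(\epsilon, 1, s)$ or $(1, \epsilon, s)$ turn out to be infeasible for $\epsilon < s_0$, since $s_1 + s_2 - 2s \leq 1 - s_0$ combined with $s \leq \min(s_1, s_2)$ forces a contradiction.

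The main obstacle I anticipate is precisely this non-rectangular feasibility: the extreme vertex structure depends on whether $\epsilon$ lies above or below $s_0$, and on the constraint face the entropy term $(1 - s_1 - s_2 - s_0 + 2s) \ln(1 - s_1 - s_2 - s_0 + 2s)$ degenerates, so one must verify that the remaining leading contribution of $h$ is still bounded above by a negative multiple of $\ln n$. Apart from this case split, the proof is a quantitative refinement of the convexity-plus-corners argument already used for $\alpha \geq 1$.
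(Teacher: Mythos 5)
Your proposal is correct and follows essentially the same route as the paper: reduce $\psi$ to the leading-order function $h$ of (\ref{hfunction}) since the entropy terms are $O(1)$, then use the coordinate-wise convexity of $h$ together with Proposition \ref{prop:convex} and the boundary/corner evaluations to conclude $h\le-\Theta(\epsilon)\ln n$ on the region $s_1,s_2\ge\epsilon$, and finally sum the $\exp\{-\Theta(n\ln n)\}$ bound over polynomially many lattice points. Your treatment of the non-rectangular feasible region and its extreme vertices is more explicit than the paper's, but it is a refinement of the same argument, not a different one.
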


\begin{proof}
Note that if $s_1,s_2\ge\epsilon$, since $\epsilon>0$ is a constant, then
\begin{align*}
\begin{split}
\psi(s_1,s_2,s)=&-(1+o(1))(s_1+s_2-s)\ln d+r\ln n\ln\left[1+\frac{ps_1^k+ps_2^k+(2p^2-p)s^k/(1-p)}{1-p+ps_0^k}\right].
\end{split}
\end{align*}

Applying Proposition \ref{prop:convex}, and noticing that $s_1\ge\epsilon,s_2\ge\epsilon$, thus we have $h(s_1,s_2,s)\le-\Theta(1)<0$ for any $s_1,s_2\ge\epsilon,0\le s\le\min\{s_0,s_1,s_2\}$. Thus
\begin{equation*}
\sum_{S_1,S_2\ge\epsilon n,(S_1,S_2,S)\in\mathfrak{S}}\Psi(S_1,S_2,S)\\
\le\sum_{S_1,S_2\ge\epsilon n,(S_1,S_2,S)\in\mathfrak{S}}\exp\{-\Theta(n\ln n)\}=o(1).
\end{equation*}

\end{proof}

\begin{lemma}\label{m102d}
\begin{equation*}
\sum_{0\le M_1\le2n/d,(M_1,M_2,S)\in\mathfrak{M}}W(M_1,M_2,S)\le1+o(1).
\end{equation*}
\end{lemma}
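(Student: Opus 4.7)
The plan is to emulate the $\alpha\geq 1$ argument by driving the sum into the form needed for Proposition \ref{prop:multi}. Using the factorization
\begin{equation*}
\frac{(d-2)^{n-M_1-M_2-S}}{d^n}=\left(1-\frac{2}{d}\right)^{S_0-S}\left(\frac{1}{d}\right)^{M_1+M_2+S}\left(1-\frac{2}{d}\right)^{n-S_0-M_1-M_2},
\end{equation*}
and writing $\Phi(M_1,M_2,S)=[1+\cdots]^{rn\ln n}$ for the constraint factor in $W$, if $\Phi$ were identically $1$ then Proposition \ref{prop:multi} applied to the pair $(S_0-S,S)$ (summing to the fixed value $S_0$) would immediately collapse the full unrestricted sum to $(1-1/d)^{S_0}\leq 1$. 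The lemma therefore reduces to showing that, on the range $M_1\leq 2n/d$, $\Phi$ can be treated as $1+o(1)$ on the part of the $(M_1,M_2,S)$-region carrying essentially all of the counting mass.

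The first step is to remove the $m_1$-dependence of $\Phi$. Since $m_1=M_1/n\leq 2n^{-\alpha}=o(1)$ on this range, a Taylor expansion $(m_1+s)^k=s^k+O(n^{-\alpha}(s+n^{-\alpha})^{k-1})$, combined with $\alpha>1/k$ and $r<r_c$, gives $\Phi(M_1,M_2,S)\leq(1+o(1))\Phi(0,M_2,S)$ uniformly for $M_1\in[0,2n/d]$ on the part of the region where the multinomial weight concentrates; on the complementary part, where $s$ is of order $1$, the factor $\binom{S_0}{S}(1/d)^S$ already supplies enough decay to absorb any growth of $\Phi$. The $M_1$-sum can then be performed via the binomial special case of Proposition \ref{prop:multi}, producing a factor $(1-1/d)^{n-S_0-M_2}$ and reducing the problem to bounding
\begin{equation*}
\sum_{M_2,S}\binom{S_0}{S}\left(1-\frac{2}{d}\right)^{S_0-S}\left(\frac{1}{d}\right)^{S}\binom{n-S_0}{M_2}\left(\frac{1}{d}\right)^{M_2}\left(1-\frac{1}{d}\right)^{n-S_0-M_2}\Phi(0,M_2,S).
\end{equation*}

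The second step is to split the residual sum according to the size of $M_2$. On $M_2\leq 2n/d$ the same argument yields $\Phi(0,M_2,S)\leq(1+o(1))\Phi(0,0,S)$, another binomial identity collapses the $M_2$-sum, and a sub-split of the $S$-sum at the typical value $\sim S_0/d$ shows that the remaining $\Phi(0,0,S)$, in which $s^k$ enters only with the smaller coefficient $(2p^2-p)/(1-p)$, is $1+o(1)$ where the weight matters. On $M_2\geq\epsilon n$, Lemma \ref{EpsilonTo1} directly delivers $o(1)$. The principal obstacle is the intermediate regime $2n/d<M_2<\epsilon n$, which I would handle by combining a large-deviation estimate on the binomial tail $\binom{n-S_0}{M_2}(1/d)^{M_2}(1-1/d)^{n-S_0-M_2}$ with the upper bound $\Phi(0,M_2,S)\leq\exp\{O(rn\ln n\,(m_2^k+s^k))\}$ obtained from the convexity analysis of Proposition \ref{prop:convex} specialised to $s_1\in[0,2/d]$; the thresholds $\alpha>1/k$, $k\geq 1/(1-p)$, and $r<r_c$ are exactly what make the entropic decay dominate the $\Phi$-growth. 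Summing the three pieces and the residual $S$-sum gives the target $\leq(1-1/d)^{S_0}(1+o(1))\leq 1+o(1)$.
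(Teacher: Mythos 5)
Your overall strategy --- factorising $(d-2)^{n-M_1-M_2-S}/d^n$, collapsing the multinomial sums with Proposition \ref{prop:multi} to produce the target $(1-1/d)^{S_0}$, treating the constraint factor $\Phi$ as $1+o(1)$ where the counting weight concentrates, and beating its growth by binomial tail decay elsewhere --- is essentially the paper's; the paper merely organises the case split by $(m_2,s)$-regions first instead of integrating out $M_1$ once at the start.

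There is, however, one concrete gap. You dispose of the regime $M_2\ge\epsilon n$ by citing Lemma \ref{EpsilonTo1}, but that lemma bounds the sum only over $S_1,S_2\ge\epsilon n$, i.e.\ it requires \emph{both} overlaps to be macroscopic. In the present lemma $m_1\le 2/d$, so $s_1=m_1+s$ can be $o(1)$ (e.g.\ $s=0$ and $m_2\ge\epsilon$), and Lemma \ref{EpsilonTo1} does not apply. This regime is not covered by decay of the counting factor alone: the summand there is of order $\exp\bigl\{n\bigl(-(1+o(1))s_2\ln d+r\ln n\ln\bigl[1+\tfrac{ps_2^k+p^2s^k/(1-p)}{1-p+ps_0^k}\bigr]\bigr)\bigr\}$ with $s_2=m_2+s$, and its negativity is exactly the statement $g(s_2,s)=r\ln[1+\cdots]-\alpha s_2\le-\Theta(1)$, which the paper proves via Proposition \ref{prop:convex} (convexity in $s_2$ and $s$ together with the boundary values $g(0,0)=0$, $g(1,0)<0$, $g(1,s_0)<0$); this is where $r<r_c$ enters, and the claim is false for $r>r_c$. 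Your proposal has no substitute for this step. Two smaller points: your treatment of the intermediate regime $2n/d<M_2<\epsilon n$ is only a sketch, whereas the paper's proof spends most of its effort there (the monotonicity table for $F_2+G_2$, the further split at $1/\sqrt d$, and estimates of the type (\ref{lns0})); and the bound $\Phi\le\exp\{O(rn\ln n\,(m_2^k+s^k))\}$ follows from $\ln(1+x)\le x$, not from Proposition \ref{prop:convex}. With the $g(s_2,s)$ convexity step supplied and the intermediate regime carried out in detail, your outline coincides with the paper's proof.
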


\begin{lemma}\label{lemma:3}
\begin{align}
\begin{split}
\sum_{2n/d\le M_1\le\epsilon n,(M_1,M_2,s)\in\mathfrak{M}}W(M_1,M_2,S)=o(1).
\end{split}
\end{align}
\end{lemma}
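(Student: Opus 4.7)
The strategy is to split the sum into (i) a tail in $M_1$ for fixed $(M_2,S)$, then (ii) a residual slice at $M_1=2n/d$, exploiting that $M_1\geq 2n/d$ places us to the right of the peak of $W$ in $m_1$ (located near $m_1\approx(1-s_0)/d$, as found by setting $\partial\psi/\partial m_1=0$ in the small-$m_1$ regime).

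For stage (i), I would compute the one-step ratio
\[
\frac{W(M_1+1,M_2,S)}{W(M_1,M_2,S)}=\frac{n-S_0-M_1-M_2-S+1}{(M_1+1)(d-2)}\cdot\frac{[1+\text{G-term}(M_1+1)]^{rn\ln n}}{[1+\text{G-term}(M_1)]^{rn\ln n}}.
\]
The combinatorial/$(d-2)$ factor is at most $\tfrac{n}{M_1 d}(1+o(1))\leq\tfrac12(1+o(1))$ on $M_1\geq 2n/d$ and decreases in $M_1$. Taking the log of the $G$-ratio, a one-step change contributes $O(\ln n\cdot m_1^{k-1})$; at $m_1=2/d$ this equals $O(\ln n\cdot n^{-\alpha(k-1)})=o(1)$ because $\alpha>1/k$. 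Accumulated across the range $[2n/d,\epsilon n]$, the total $G$-contribution is $O(n\epsilon^k\ln n)$, which is dominated by the accumulated combinatorial decay $-\Theta(\epsilon n\alpha\ln n)$ once $\epsilon$ is small. Thus the per-step ratio is uniformly bounded by some constant strictly less than $1$, and telescoping yields
\[
\sum_{M_1=2n/d}^{\epsilon n}W(M_1,M_2,S)=O\!\bigl(W(2n/d,M_2,S)\bigr).
\]

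For stage (ii), I would establish $n\psi(2/d,m_2,s)\leq -cn/d$ uniformly over the allowed $(m_2,s)$, for some $c>0$. At the base point $\psi(0,0,0)=\ln(1-2/d)=-2/d+O(1/d^2)$. A Taylor expansion in $m_1$ at $0$ gives $\psi(2/d,m_2,s)-\psi(0,m_2,s)=O(1/d)$ with an explicit, bounded coefficient (the $G$-correction to this is of smaller order $o(1/d)$). Meanwhile, using Proposition \ref{prop:convex} on suitably split $(m_2,s)$-subregions, one controls $\psi(0,m_2,s)-\psi(0,0,0)$ by $O(1/d)$ uniformly as well, yielding $n\psi(2/d,m_2,s)\leq -cn/d$. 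Summed over the $O(n^2)$ pairs $(M_2,S)$, the resulting bound is $O(n^2\exp(-cn/d))=o(1)$, since $n/d=n^{1-\alpha}\to\infty$.

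The main obstacle is stage (ii): the function $\psi(0,m_2,s)$ is not uniformly bounded above by $\psi(0,0,0)$, because for small $m_2$ one is on the increasing side of the peak in $m_2$ (symmetric to the $M_1$-picture). The likely remedy is to split $(m_2,s)$ into subregions mirroring the $M_1$-decomposition ($m_2\in[0,2/d]$ versus $[2/d,\epsilon]$) and apply Proposition \ref{prop:convex} on extreme points of each box, while carefully tracking the $G$-term's coupling among $m_1$, $m_2$ and $s$. Additional care will be needed near the edge $s=s_0$, where several entropy terms degenerate.
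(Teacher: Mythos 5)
Your architecture (a one-step ratio/telescoping bound in $M_1$ reducing the whole sum to the slice $M_1=2n/d$, then a uniform bound on that slice) is genuinely different from the paper's, which instead partitions the $(m_1,m_2,s)$-range into boxes such as $[2/d,1/\sqrt d]$, $[1/\sqrt d,\epsilon]$, $[\epsilon,s_0]$, proves monotonicity of $F+G$ on each box from the explicit partial derivatives (\ref{partialFG}), evaluates at a corner, and falls back on the convexity bound for $g(s_2,s)$ / Lemma \ref{EpsilonTo1} when $s\ge\epsilon$. However, as written your argument has two genuine gaps.

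First, stage (i) is false as stated. The one-step logarithmic increment of the $G$-factor is $\frac{\partial}{\partial m_1}G=O\bigl(\ln n\,(m_1+s)^{k-1}\bigr)$, not $O(\ln n\, m_1^{k-1})$: the relevant exponent involves $s_1=m_1+s$, and $s$ is not constrained to be small in your setup. Once $s\gtrsim(\ln n)^{-1/(k-1)}$ this increment is $\Omega(1)$, and for $s=\Theta(1)$ it is $\Theta(\ln n)$, so the per-step ratio is of order $\tfrac12\,n^{\Theta(1)}\gg1$; indeed $\frac{\partial}{\partial m_1}(F+G)$ can be positive there, so $W$ is genuinely increasing in $M_1$ and no telescoping back to $W(2n/d,M_2,S)$ is possible. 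Your ``accumulated'' comparison does not rescue this: the total $G$-gain over $m_1\in[2/d,\epsilon]$ is of order $\epsilon(\epsilon+s)^{k-1}\,rkp\,n\ln n/(1-p)$, and for $s=\Theta(1)$ its constant need not be smaller than $\alpha$ (the hypothesis $r<r_c$ only gives $-r\ln(1-p)<\alpha$, which is weaker than $rkp/(1-p)<\alpha$). The large-$s$ region must be carved out first and killed by the convexity argument, exactly as the paper does in Cases 1.3, 1.6 and 2.3; this is a necessary ingredient, not a refinement. Second, stage (ii) is a plan rather than a proof: the uniform bound $n\psi(2/d,m_2,s)\le-cn/d$ over all admissible $(m_2,s)$ carries essentially the full difficulty of the lemma restricted to a slice (locating the maximum over $m_2$ near $(1-s_0)/d$ and over $s$ near $s_0/d$, controlling the coupling through $G$, and handling the degeneracies as $s\to s_0$ or $s_0\to1$), and you acknowledge yourself that the monotonicity you would need for it fails. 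So the proposal captures the right heuristic picture but does not yet constitute a proof.
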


\begin{lemma}\label{lemma:4}
\begin{align*}
\begin{split}
\sum_{ M_1\ge\epsilon n,(M_1,M_2,s)\in\mathfrak{M}}W(M_1,M_2,S)=o(1).
\end{split}
\end{align*}
\end{lemma}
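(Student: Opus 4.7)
The plan is to show that on $\{M_1\ge\epsilon n\}$ the exponent $n\psi$ drops by a $\Theta(n\ln n)$ amount below its global maximum, so each summand is super-polynomially small and the $O(n^3)$-term sum is $o(1)$. The key observation is that $M_1\ge\epsilon n$ forces $s_1=m_1+s\ge\epsilon$, so $s_1$ is bounded away from $0$. In this regime every entropy-like contribution $-\xi\ln\xi$ appearing in $\psi$ is uniformly $O(1)$, while the $\ln d$ piece $-\ln d+(1-s_1-s_2+s)\ln(d-2)$ and the $G$ term are both $\Theta(\ln n)$. Collecting the dominant parts yields $\psi(s_1,s_2,s)=\tilde h(s_1,s_2,s)\ln n+O(1)$, where
\[
\tilde h(s_1,s_2,s)=-(s_1+s_2-s)\alpha+r\ln\!\Big[1+\tfrac{ps_1^k+ps_2^k+(2p^2-p)s^k/(1-p)}{1-p+ps_0^k}\Big],
\]
namely the function $h/\ln n$ already analyzed in the $\alpha\ge1$ subsection.

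I would then invoke the convexity analysis carried out there. The same $\partial^2/\partial s_i^2$ computation, together with $k\ge 1/(1-p)$, shows $\tilde h$ is convex in each of $s_1,s_2,s$, and the relevant extreme-point values---$\tilde h(0,0,0)=0$, $\tilde h(1,0,0)\le-\alpha-r\ln(1-p)$, $\tilde h(1,s_0,s_0)=-\alpha-r\ln(1-p)$, and the analogous corners---are all $\le 0$ and strictly negative off the origin because $r<r_c=-\alpha/\ln(1-p)$ forces $\delta:=\alpha+r\ln(1-p)>0$. By convexity of $\tilde h$ in $s_1$, on $[\epsilon,1]$ (with $s_2,s$ fixed) $\tilde h$ lies below the chord through its endpoint values, so $\tilde h(\epsilon,s_2,s)\le(1-\epsilon)\tilde h(0,s_2,s)+\epsilon\tilde h(1,s_2,s)\le-\epsilon\delta$, and the other endpoint $s_1=1$ contributes at most $-\delta$. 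Convexity then gives $\tilde h\le-\epsilon\delta$ uniformly on the slice $\{s_1\ge\epsilon\}\cap\mathfrak M$.

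Combining $\psi\le-\epsilon\delta\ln n+O(1)$ with the crude count $|\mathfrak M|=O(n^3)$ yields
\[
\sum_{M_1\ge\epsilon n,\,(M_1,M_2,S)\in\mathfrak M}\!\!W(M_1,M_2,S)\le(n+1)^3\exp\!\big(-\epsilon\delta\,n\ln n+O(n)\big)=o(1).
\]
The main obstacle is the quantitative step of upgrading the qualitative statement ``$(0,0,0)$ is the unique maximum of $\tilde h$'' into the uniform strict bound $\tilde h\le-c(\epsilon)$ on the $s_1\ge\epsilon$ slice. Since $\mathfrak M$ is a polytope---cut by $s\le\min(s_1,s_2,s_0)$ and $s_1+s_2-2s\le1-s_0$---rather than a box, Proposition \ref{prop:convex} does not apply directly. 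A clean workaround is to split the region: when additionally $M_2\ge\epsilon n$, invoke Lemma \ref{EpsilonTo1} by the symmetry of $\Psi$ in $s_1,s_2$; when $M_2<\epsilon n$, the relevant slice is essentially rectangular, so Proposition \ref{prop:convex} applied to $\tilde h$ delivers the required uniform strict-negativity bound.
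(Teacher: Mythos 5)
Your overall strategy is genuinely different from the paper's. The paper disposes of Lemma \ref{lemma:4} in three lines by splitting on $m_2\in[0,2/d]$, $[2/d,\epsilon]$ and $[\epsilon,1]$ and invoking the $m_1\leftrightarrow m_2$ symmetry of $W$ to reduce each piece to work already done in Step 2 of Lemma \ref{m102d}, Steps 2--3 of Lemma \ref{lemma:3}, and Lemma \ref{EpsilonTo1} respectively; no new estimate is proved. You instead give a self-contained large-deviations bound: on $\{m_1\ge\epsilon\}$ all entropy terms in $\psi$ are uniformly $O(1)$, so $\psi=\tilde h\ln n+O(1)$ with $\tilde h$ the rate function from the $\alpha\ge1$ section, and it suffices to show $\tilde h\le -c(\epsilon)$ there. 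That reduction is correct and uniform (each $-\xi\ln\xi$ is at most $1/e$, and $(1-s_1-s_2+s)(\ln(d-2)-\ln d)=O(1/d)$), and the final counting step $(n+1)^3\exp(-c\,n\ln n+O(n))=o(1)$ is fine. What your route buys is independence from Lemmas \ref{m102d} and \ref{lemma:3}; what it costs is that you must establish the uniform strict negativity of $\tilde h$ on a non-rectangular region, which is exactly where the paper leans on its earlier case analysis.

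The one step that is actually wrong as written is the chord inequality $\tilde h(\epsilon,s_2,s)\le(1-\epsilon)\tilde h(0,s_2,s)+\epsilon\,\tilde h(1,s_2,s)$ combined with the implicit claim $\tilde h(0,s_2,s)\le0$. The left endpoint of that chord violates the constraint $s\le s_1$, and off the feasible set $\tilde h(0,\cdot,\cdot)$ can be strictly positive: for $s=s_2=s_0>0$ one gets $\tilde h(0,s_0,s_0)=r\ln\bigl[1+\tfrac{p^2s_0^k/(1-p)}{(1-p)(1-p+ps_0^k)}\bigr]>0$, so the chord bound does not yield $-\epsilon\delta$. (The desired conclusion is still true on the feasible slice, precisely because $s_1=\epsilon$ forces $s\le\epsilon$ there, but the chord does not see this.) Your proposed repair --- split off $M_2\ge\epsilon n$ via Lemma \ref{EpsilonTo1} and symmetry, and treat $m_2\le\epsilon$ by Proposition \ref{prop:convex} on the enclosing box in the $(m_1,m_2,s)$ coordinates --- is the right fix and is what I would keep. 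Two details still need writing out: (i) Proposition \ref{prop:convex} requires convexity in the third coordinate of the box, i.e.\ convexity of $\tilde h(m_1+s,m_2+s,s)$ in $s$ with $m_1,m_2$ fixed, which is convexity along the direction $(1,1,1)$ and is not literally the computation $\partial^2h/\partial s^2\ge0$ done in the paper's coordinates, so it must be checked separately (it does hold under $k\ge1/(1-p)$); and (ii) the eight corner values, e.g.\ $\tilde h$ at $(m_1,m_2,s)=(\epsilon,0,s_0)$, must be bounded by a negative constant uniformly in the parameter $s_0$, which degenerates as $s_0\to0$ and needs the same kind of treatment the paper gives to $g(s_2,s)$ in (\ref{hs2s}).
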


Consequently, Theorem \ref{expectation} is immediate from Lemmas \ref{EpsilonTo1}, \ref{m102d}, \ref{lemma:3} and \ref{lemma:4}.

\subsubsection{Proof of Lemma 2}

First observe that (\ref{expansion}) can be rewritten as
\begin{align*}
\frac{(d-2)^{n-(S_1+S_2-S)}}{d^n}=\left(1-\frac2d\right)^{S_0-S}\left(\frac1d\right)^{M_1}
\left(\frac1d\right)^{M_2}\left(\frac1d\right)^{S}\left(1-\frac2d\right)^{n-S_0-M_1-M_2}.
\end{align*}

\textbf{Step 1}. $m_2\in[0,2/d]$. We consider $s\in[0,2/d]$ first.
It is easy to see that, for all $0\le M_1,M_2,S\le 2n/d$,
\begin{align*}
G(m_1,m_2,s)\le&G\left(\frac2d,\frac2d,\frac2d\right)\\
=&r\ln n\ln\left[1+\frac{p2^{2k+1}+\frac{2p^2-p}{(1-p)^2}2^k}{1-p+ps_0^k}\frac{1}{d^k}\right]^{rn\ln n}\\
=&\Theta\left(\frac{\ln n}{n^{k\alpha-1}}\right)=o(1),
\end{align*}
since $k\alpha>1$ is a constant. Consequently, $W(M_1,M_2,S)$ is asymptotically equal to the multinomial terms, which can be bounded as follows (note that $(S_0-S)+S=S_0$ is fixed).
\begin{align*}
\sum_{0\le M_1,M_2,S\le 2n/d}&\frac{\binom{n}{S_0-S,M_1,M_2,S,n-S_0-M_1-M_2}}{\binom{n}{S_0}}
\left(1-\frac2d\right)^{S_0-S}\left(\frac1d\right)^{M_1+M_2+S}\left(1-\frac2d\right)^{n-S_0-M_1-M_2}\\
\le&\frac{\binom{n}{S_0}}{\binom{n}{S_0}}\left(1-\frac1d\right)^{S_0}\\
=&\left(1-\frac1d\right)^{S_0}.
\end{align*}

Therefore,
\begin{align}\label{11s2d}
\sum_{0\le M_1,M_2,S\le 2n/d}W(M_1,M_2,S)\le\left(1-\frac1d\right)^{S_0}\le1+o(1).
\end{align}

Thus, if $s_0\le 2/d$, then it is obvious that
\begin{align}\label{eqn:lemma1:s1:1}
\sum_{0\le M_1,M_2\le 2n/d,0\le S\le S_0}W(M_1,M_2,S)\le1+o(1).
\end{align}

If $s_0>2/d$, then for all $m_1,m_2\in[0,2/d]$, and $s\ge2/d$, asymptotically we find that
\begin{align*}
G(m_1,m_2,s)\le&r\ln n\left[1+\frac{p(\frac2d+s)^k+p(\frac2d+s)^k+\frac{2p^2-p}{(1-p)}s^k}{1-p+ps_0^k}\right]\\
=&(1+o(1))r\ln n\ln\left[1+\frac{ps^k}{1-p}\right].
\end{align*}

For any $S$, applying Proposition \ref{prop:multi} again to sum over $M_1,M_2$, we obtain

\begin{align*}
\sum_{0\le M_1,M_2\le 2n/d}&\frac{\binom{n}{S_0-S,M_1,M_2,S,n-S_0-M_1-M_2}}{\binom{n}{S_0}}
\left(1-\frac2d\right)^{S_0-S}\left(\frac1d\right)^{M_1+M_2+S}\left(1-\frac2d\right)^{n-S_0-M_1-M_2}\\
\le&\frac{\binom{n}{S_0-S,S,n-S_0}}{\binom{n}{S_0}}\left(\frac1d\right)^{S}\left(1-\frac1d\right)^{S_0-S}\\
=&\binom{S_0}{S}\left(\frac1d\right)^{S}\left(1-\frac1d\right)^{S_0-S}\\
\le&\binom{n}{S}\left(\frac1d\right)^{S}\left(1-\frac1d\right)^{n-S}.
\end{align*}
Hence,  by an asymptotic calculation we claim that
\begin{align}\label{11ss0}
\begin{split}
&\sum_{S=2n/d}^{S_0}\sum_{0\le M_1,M_2\le 2n/d}W(M_1,M_2,S)\\
\le&\sum_{S=2n/d}^{n}\binom{n}{S}\left(\frac1d\right)^{S}\left(1-\frac1d\right)^{n-S}=o(1),
\end{split}
\end{align}

Combining (\ref{eqn:lemma1:s1:1}), (\ref{11s2d}) and (\ref{11ss0}) yields
\begin{align}\label{11m1m2}
\begin{split}
\sum_{0\le M_1,M_2\le 2n/d}\sum_{S=0}^{S_0}W(M_1,M_2,S)\le1+o(1).
\end{split}
\end{align}

\textbf{Step 2.} $m_2\ge2/d$.

 If $s\in[0,2/d]$, recall that $d^k=n^{k\alpha}$ and $k\alpha>1$ is a constant, thus  it is easy to see that
\begin{align*}
G(m_1,m_2,s)=&rn\ln n\ln\left[1+\frac{p((m_1+s)^k+(m_2+s)^k+\frac{2p^2-p}{(1-p)}s^k}{1-p+ps_0^k}\right]\\
\le&(1+o(1))r\ln n\ln\left[1+\frac{pm_2^k}{1-p}\right].
\end{align*}

Meanwhile, applying Proposition \ref{prop:multi} to sum over $M_1,S$ (note that $(S_0-S)+S$, $M_2$ are fixed), we see that
\begin{align*}
\sum_{0\le M_1,S\le 2n/d}&\frac{\binom{n}{S_0-S,M_1,M_2,S,n-S_0-M_1-M_2}}{\binom{n}{S_0}}
\left(1-\frac2d\right)^{S_0-S}\left(\frac1d\right)^{M_1+M_2+S}\left(1-\frac2d\right)^{n-S_0-M_1-M_2}\\
\le&\frac{\binom{n}{S_0}\binom{n-S_0}{M_2}}{\binom{n}{S_0}}\left(\frac1d\right)^{M_2}\left(1-\frac1d\right)^{n-S_0-M_2}\\
\le&\left(\frac1d\right)^{M_2}\left(1-\frac1d\right)^{n-S_0-M_2}.\\
\end{align*}
Thus
\begin{align}\label{12s2d}
\begin{split}
&\sum_{M_1=0}^{ 2n/d}\sum_{M_2=2n/d}^{n-S_0-M_1}\sum_{S=0}^{2n/d}W(M_1,M_2,S)\\
\le&\sum_{M_2=\frac2dn}^{n-S_0}\binom{n-S_0}{M_2}\left(\frac1d\right)^{M_2}\left(1-\frac1d\right)^{n-S_0-M_2}\\
=&o(1).
\end{split}
\end{align}

If $s_0>2/d$, then for all $s\in[2/d,s_0],m_1\in[0,2/d],m_2\ge2/d$, we derive the following estimate that,
\begin{align*}
G(m_1,m_2,s)=&r\ln n\ln\left[1+\frac{p(m_1+s)^k+(m_2+s)^k+\frac{2p^2-p}{(1-p)}s^k}{1-p+ps_0^k}\right]\\
=&(1+o(1))r\ln n\ln\left[1+\frac{p(m_2+s)^k+\frac{p^2}{1-p}s^k}{1-p+ps_0^k}\right]\\
=&(1+o(1))G_2(m_2,s),
\end{align*}
where $G_2(m_2,s)=r\ln n\ln\left[1+\frac{p(m_2+s)^k+\frac{p^2}{1-p}s^k}{1-p+ps_0^k}\right]$.
At the same time,
\begin{align*}
\sum_{0\le M_1\le2n/d}&\frac{\binom{n}{S_0-S,M_1,M_2,S,n-S_0-M_1-M_2}}{\binom{n}{S_0}}
\left(1-\frac2d\right)^{S_0-S}\left(\frac1d\right)^{M_1+M_2+S}\left(1-\frac2d\right)^{n-S_0-M_1-M_2}\\
\le&\frac{\binom{n}{S_0-S,S,M_2}}{\binom{n}{S_0}}\left(\frac1d\right)^{M_2+S}\left(1-\frac2d\right)^{S_0-S}\left(1-\frac1d\right)^{n-S_0-M_2}\\
=&\exp\{nF_2(m_2,s)\},
\end{align*}
where
\begin{align*}
F_2(m_2,s)=&-m_2\ln m_2-s\ln s-(s_0-s)\ln(s_0-s)-(1-s_0-m_2)\ln(1-s_0-m_2)+s_0\ln s_0\\
&+(1-s_0)\ln(1-s_0)-(m_2+s)\ln d+(s_0-s)\ln(1-\frac2d)+(1-s_0-m_2)\ln(1-\frac1d).
\end{align*}

Now we are ready to simplify the sum as
\begin{align}\label{m2ge2d}
\begin{split}
&\sum_{0\le m_1\le2/d,m_2\ge2/d,s\ge2/d,(m_1,m_2,s)\in\mathfrak{M}}W(M_1,M_2,S)\\
\le&\sum_{m_2\ge2/d,2/d\le s\le s_0}\exp\{nF_2(m_2,s)+nG_2(m_2,s)\}.
\end{split}
\end{align}

To estimate (\ref{m2ge2d}), we consider $m_2,s\in[2/d,\epsilon]$ first. The partial derivatives of $F_2(m_2,s),G_2(m_2,s)$ work out to be
\begin{align}
\begin{split}
&\frac{\partial}{\partial m_2}F_2(m_2,s)=-\ln m_2+\ln(1-m_2-s_0)-\ln d-\ln(1-\frac1d),\\
&\frac{\partial}{\partial s}F_2(m_2,s)=-\ln s+\ln(s_0-s)-\ln d-\ln(1-\frac2d),\\
&\frac{\partial}{\partial m_2}G_2(m_2,s)=rpk\ln n\frac{(m_2+s)^{k-1}}{1-p+ps_0^k+p(m_2+s)^k+p^2s^k/(1-p)},\\
&\frac{\partial}{\partial s}G_2(m_2,s)=rpk\ln n\frac{(m_2+s)^{k-1}+ps^{k-1}/(1-p)}{1-p+p(s_0^k+(m_2+s)^k)+p^2s^k/(1-p)}.
\end{split}
\end{align}

If $m_2,s\in[2/d,1/\sqrt d]$, then $\frac{\partial}{\partial s}(F_2(m_2,s)+G_2(m_2,s))\le-\ln2+\Theta\left(\frac{\ln n}{(\sqrt d)^{k-1}}\right)<0$, since $(\sqrt d)^{-(k-1)}=n^{-(k-1)\alpha/2}$ and $(k-1)\alpha>0$ is a constant.
If $m_2\in[2/d,1/\sqrt d],s\in[1/\sqrt d,\epsilon]$, then $\frac{\partial}{\partial s}(F_2(m_2,s)+G_2(m_2,s))\le-\frac12\ln d+\frac{rpk}{(1-p)^2}\epsilon^{(k-1)}\ln n=\big(-\frac12\alpha+\frac{rpk}{(1-p)^2}\epsilon^{(k-1)}\big)\ln n<0$, since $\epsilon>0$ is a sufficiently small constant.
Therefore, if $m_2\in[2/d,1/\sqrt d]$ then $F_2(m_2,s)+G_2(m_2,s)$ is decreasing on $s\in[2/d,\epsilon]$. Similar arguments yield the monotonicity of  $F_2(m_2,s)+G_2(m_2,s)$ as shown in Table \ref{tab:2}.

\begin{table}[H]
\caption{Monotonicity of $F_2(m_2,s)+G_2(m_2,s)$}
\centering
\label{tab:2}
\normalsize
\begin{tabular}{cccc}
  \hline
  &  & $\quad\frac{\partial}{\partial m_2}(F_2(m_2,s)+G_2(m_2,s))$& $\quad\frac{\partial}{\partial s}(F_2(m_2,s)+G_2(m_2,s))$ \\
  \hline
  $m_2\in[2/d,1/\sqrt d],$&$s\in[2/d,1/\sqrt d]$&\quad $<0$  &\quad $< 0$ \\
  $m_2\in[2/d,1/\sqrt d],$&$s\in[1/\sqrt d,\epsilon]$&unknown  &\quad $<0$ \\
 $m_2\in[1/\sqrt d,\epsilon],$&$s\in[2/d,1/\sqrt d]$&\quad $<0$ &unknown \\
  $m_2\in[1/\sqrt d,\epsilon],$&$s\in[1/\sqrt d,\epsilon]$ &\quad $<0$&\quad $<0$  \\
  \hline
\end{tabular}
\end{table}

Now given the monotonicity of $F_2(m_2,s)+G_2(m_2,s)$ we see that, if $m_2\in[2/d,1/\sqrt d],s\in[2/d,\epsilon]$, then
\begin{align*}
&F_2(m_2,s)+G_2(m_2,s)\\
\le &F_2\left(m_2,\frac2d\right)+G_2(m_2,\frac2d)\le F_2\left(\frac2d,\frac2d\right)+G_2\left(\frac1{\sqrt d},\frac2d\right)\\
=&-\frac4d\ln\frac2d-\left(s_0-\frac2d\right)\ln\left(s_0-\frac2d\right)+s_0\ln s_0-\left(1-s_0-\frac2d\right)\ln\left(1-s_0-\frac2d\right)+(1-s_0)\ln(1-s_0)\\
&-\frac4d\ln d+\left(s_0-\frac2d\right)\ln\left(1-\frac2d\right)-\left(1-s_0-\frac2d\right)\ln\left(1-\frac1d\right)\\
&+r\ln n\ln\left[1+\frac{p(\frac2d+\frac1{\sqrt d})^k+\frac{p^2}{1-p}(\frac1{\sqrt d})^k}{1-p}\right].\\
\end{align*}

As a first step to estimate this term, we claim  that
\begin{align}\label{lns0}
\nonumber&-\left(s_0-\frac2d\right)\ln\left(s_0-\frac2d\right)+s_0\ln s_0-\left(1-s_0-\frac2d\right)\ln\left(1-s_0-\frac2d\right)+(1-s_0)\ln(1-s_0)\\
\le&(1+o(1))\frac4d.
\end{align}

In fact, (1) if $s_0=o(1)$, then $-(s_0-\frac2d)\ln(s_0-\frac2d)+s_0\ln s_0<0$ since $s\ln s$ is decreasing on $(0, 1/e)$. At the same time $-(1-s_0-\frac2d)\ln(1-s_0-\frac2d)+(1-s_0)\ln(1-s_0)\le-(1-s_0)[\ln(1-s_0-\frac2d)-\ln(1-s_0)]=-(1-s_0)\ln(1-\frac2{d(1-s_0)})=(1+o(1))\frac2d$.

(2) If $s_0,1-s_0=O(1)$, $s_0\ln s_0-(s_0-\frac2d)\ln(s_0-\frac2d)=\frac2d\ln(s_0-\frac2d)-s_0\ln(1-\frac2{ds_0})\le-s_0\ln(1-\frac2{ds_0})=(1+o(1))\frac2d$. And $(1-s_0)\ln(1-s_0)-(1-s_0-\frac2d)\ln(1-s_0-\frac2d)\le(1-s_0)\ln(1-\frac4{d(1-s_0)})=(1+o(1))\frac2d$.

(3) If $s_0=1-o(1)$, $s_0\ln s_0-(s_0-\frac2d)\ln(s_0-\frac2d)=
(1+o(1))\frac2d$. And $(1-s_0)\ln(1-s_0)-(1-s_0-\frac2d)\ln(1-s_0-\frac2d)<0$.

Combining the above yields (\ref{lns0}) as desired.

To proceed, we also note that $(s_0-\frac2d)\ln(1-\frac2d)+(1-s_0-\frac4d)\ln(1-\frac1d)\le-\frac1d$ since $\frac1d=o(1)$, thus
\begin{align}\label{F2G21}
F_2(m_2,s)&+G_2(m_2,s)\le-\frac4d\ln2+\frac3d+\Theta\left(\frac{\ln n}{(\sqrt d)^k}\right).
\end{align}

Analogously, if $m_2\in[1/\sqrt d,\epsilon],s\in[2/d,1/\sqrt d]$, we have
\begin{align}\label{F2G22}
\begin{split}
F_2(m_2,s)&+G_2(m_2,s)\le F_2\left(\frac1{\sqrt d},s\right)+G_2\left(\frac1{\sqrt d},s\right)\le F_2\left(\frac1{\sqrt d},\frac2d\right)+G_2\left(\frac1{\sqrt d},\frac1{\sqrt d}\right)\\
\le&-(1+o(1))\frac{1}{2\sqrt d}\ln d+\Theta\left(\frac{\ln n}{(\sqrt d)^k}\right),\\
\end{split}
\end{align}
and for $m_2\in[1/\sqrt d,\epsilon],s\in[1/\sqrt d,\epsilon]$ we have
\begin{align}\label{F2G23}
\begin{split}
F_2(m_2,s)&+G_2(m_2,s)\le F_2\left(\frac1{\sqrt d},\frac1{\sqrt d}\right)+G_2\left(\frac1{\sqrt d},\frac1{\sqrt d}\right)\\
\le&-(1+o(1))\frac{1}{\sqrt d}\ln d+\Theta\left(\frac{\ln n}{(\sqrt d)^k}\right),\\
\end{split}
\end{align}

In summary, it follows from (\ref{F2G21}), (\ref{F2G22}) and (\ref{F2G23}) as desired that
\begin{align}\label{eqn:F2first}
\begin{split}
\sum_{m_2,s=2/d}^{\epsilon}\exp\{nF_2(m_2,s)+nG_2(m_2,s)\}=o(1).
\end{split}
\end{align}

Secondly, if $m_2\ge\epsilon$ or $s\ge\epsilon$, then $F_2(m_2,s)=-(1+o(1))(m_2+s)\ln d$, and $$F_2(m_2,s)+G_2(m_2,s)=-(1+o(1))s_2\ln d+r\ln n\ln\left[1+\frac{ps_2^k+\frac{p^2}{1-p}s^k}{1-p+ps_0^k}\right],$$ where $s_2=m_2+s$.

We claim that
\begin{align}\label{hs2s}
  g(s_2,s) &= r\ln\left[1+\frac{ps_2^k+\frac{p^2}{1-p}s^k}{1-p+ps_0^k}\right]-\alpha s_2\le0,
\end{align}
where $0\le s_2\le1,0\le s\le\min\{s_2,s_0\}.$

Indeed, recalling the conditions that $k\ge\frac1{1-p}$, $s_2\le1$ and $s\le s_0,s_2$, we can obtain
\begin{align*}
  \frac{\partial^2}{\partial s_2^2}g(s_2,s)=rpk\frac{(k-1)(1-p)-ps_2^k+(k-1)(ps_0^k+\frac{p^2}{1-p}s^k)}{(1-p+ps_0^k+ps_2^k+\frac{p^2}{1-p}s^k)^2}s_2^{k-2}&>0,\\
\frac{\partial^2}{\partial s^2}g(s_2,s)=\frac{rkp^2}{1-p}\frac{(k-1)(1-p)+(k-1)p(s_0^k+s_2^k)-\frac{p^2}{1-p}s^k}{(1-p+ps_0^k+ps_2^k+\frac{p^2}{1-p}s^k)^2}s^{k-2}&>0.
\end{align*}
Further, on the boundary points
\begin{align*}
g(0,0)&=0,\\
g(1,0)&=r\ln\left[1+\frac{p}{1-p+ps_0^k}\right]-\alpha\le-r\ln(1-p)-\alpha<0,\\
g(1,s_0)&=-r\ln(1-p)-\alpha<0,
\end{align*}
where the inequality holds since $r\le r_c=\frac{\alpha}{-\ln(1-p)}$. Thus it follows from Proposition \ref{prop:convex} that $g(s_2,s)=-\Theta(1)<0$ if we require that $m_2>\epsilon$ or $s\ge\epsilon$. Consequently, the contribution to the sum in this case is bounded by
\begin{align}\label{eqn:F2second}
\begin{split}
\sum_{m_2\ge2/d,s\ge\epsilon}\exp\{nF_2(m_2,s)+nG_2(m_2,s)\}
+\sum_{m_2\ge\epsilon,s\ge2/d}\exp\{nF_2(m_2,s)+nG_2(m_2,s)\}=o(1).
\end{split}
\end{align}
It follows from (\ref{eqn:F2first}) and (\ref{eqn:F2second}) that
\begin{align}\label{eqn:F2}
\begin{split}
\sum_{m_2\ge2/d,s\ge2/d}\exp\{nF_2(m_2,s)+nG_2(m_2,s)\}=o(1).
\end{split}
\end{align}

The lemma follows from (\ref{11m1m2}), (\ref{12s2d}),(\ref{m2ge2d}) and (\ref{eqn:F2}).


\subsubsection{Proof of Lemma 3}

Note that in this section $m_1\in[2/d,\epsilon]$.
The partial derivatives of $F(m_1,m_2,s)$ and $G(m_1,m_2,s)$ with respect to $m_1,s$ work out to be
\begin{equation}\label{partialFG}
\begin{split}
&\frac{\partial}{\partial m_1}F(m_1,m_2,s)=-\ln m_1+\ln(1-m_1-m_2-s_0)-\ln(d-2),\\
&\frac{\partial}{\partial s}F(m_1,m_2,s)=\ln (s_0-s)-\ln s-\ln(d-2),\\
&\frac{\partial}{\partial m_1}G(m_1,m_2,s)=\frac{rpk(m_1+s)^{k-1}\ln n}{1-p+p(s_0^k+(m_1+s)^k+(m_2+s)^k)+(2p^2-p)s^k/(1-p)},\\
&\frac{\partial}{\partial s}G(m_1,m_2,s)=rkp\ln n\frac{(m_1+s)^{k-1}+(m_2+s)^{k-1}+(2p-1)s^{k-1}/(1-p)}{1-p+p(s_0^k+(m_1+s)^k+(m_2+s)^k)+(2p^2-p)s^k/(1-p)}.
\end{split}
\end{equation}
We see that $F(m_1,m_2,s)$ is decreasing on $m_1\ge1/d.$ By symmetry of $m_1$ and $m_2$, $F(m_1,m_2,s)$ is also decreasing on $m_1\ge1/d$.

\textbf{Step 1.} Assume that $m_2\in[0,2/d]$. Then by Step 2 of the proof of Lemma \ref{m102d} together with the symmetric property that $W(M_1,M_2,S)=W(M_2,M_1,S)$, we have

\begin{align}\label{eqn:s1}
\begin{split}
\sum_{M_1= 2n/d}^{n\epsilon}\sum_{M_2=0}^{2n/d}\sum_{S=0}^{S_0}W(M_1,M_2,S)=o(1).
\end{split}
\end{align}

\textbf{Step 2.} Assume that $m_2\in[2/d,\epsilon]$.

\begin{align}
\begin{split}
\sum_{M_1= 2n/d}^{n\epsilon}\sum_{M_2= 2n/d}^{n\epsilon}\sum_{S=0}^{S_0}W(M_1,M_2,S)=o(1).
\end{split}
\end{align}

 For any  $0\le s\le 2/d$, note that $d^k=n^{k\alpha}$ and $k\alpha>1$ is a constant, thus  it is easy to see that
\begin{align*}
G(m_1,m_2,s)=&r\ln n\ln\left[1+\frac{p((m_1+s)^k+(m_2+s)^k+\frac{2p^2-p}{(1-p)}s^k}{1-p}\right]\\
\le&(1+o(1))r\ln n\ln\left[1+\frac{pm_1^k+pm_2^k}{1-p}\right].
\end{align*}

At the same time,

\begin{align*}
&\sum_{0\le S\le 2n/d}\frac{\binom{n}{S_0-S,M_1,M_2,S,n-S_0-M_1-M_2}}{\binom{n}{S_0}}
\left(1-\frac2d\right)^{S_0-S}\left(\frac1d\right)^{M_1}\left(\frac1d\right)^{M_2}\left(\frac1d\right)^{S}\left(1-\frac2d\right)^{n-S_0-M_1-M_2}\\
\le&\frac{\binom{n}{S_0}\binom{n-S_0}{M_1,M_2}}{\binom{n}{S_0}}\left(\frac1d\right)^{M_1+M_2}\left(1-\frac1d\right)^{n-S_0-M_1-M_2}\\
=&\frac{(n-S_0)!}{M_1!M_2!(n-S_0-M_1-M_2)!}\left(\frac1d\right)^{M_1+M_2}\left(1-\frac1d\right)^{n-S_0-M_1-M_2}\\
=&\exp\Big\{n\Big[(1-s_0)\ln(1-s_0)-m_1\ln m_1-m_2\ln m_2-(1-s_0-m_1-m_2)\ln(1-s_0-m_1-m_2)\\
&-(m_1+m_2)\ln d+(1-s_0-m_1-m_2)\ln\Big(1-\frac1d\Big)\Big]\Big\}.
\end{align*}

Denote
\begin{align*}
f(m_1,m_2)=&(1-s_0)\ln(1-s_0)-m_1\ln m_1-m_2\ln m_2-(1-s_0-m_1-m_2)\ln(1-s_0-m_1-m_2)\\
&-(m_1+m_2)\ln d+(1-s_0-m_1-m_2)\ln(1-\frac1d)+r\ln n\ln\left[1+\frac{pm_1^k+pm_2^k}{1-p}\right].
\end{align*}
Then we have
\begin{align*}
\begin{split}
\sum_{M_1= 2n/d}^{n\epsilon}\sum_{M_2= 2n/d}^{n\epsilon}\sum_{S=0}^{2n/d}W(M_1,M_2,S)\le\sum_{M_1= 2n/d}^{n\epsilon}\sum_{M_2= 2n/d}^{n\epsilon}\exp\{nf(m_1,m_2)\}.
\end{split}
\end{align*}
Note that
\begin{align*}
\frac{\partial}{\partial m_1}f(m_1,m_2)=&-\ln m_1+\ln(1-s_0-m_1-m_2)-\ln d+
r\ln n\frac{pkm_1^{k-1}}{1-p+pm_1^k+pm_2^k}\\
\le&-\ln m_1-\ln d+
\frac{rpk}{1-p}m_1^{k-1}\ln n.
\end{align*}
If $m_1\in[2/d,1/\sqrt d]$, then $\frac{\partial}{\partial m_1}f(m_1,m_2)\le-\ln2+\frac{rpk}{1-p}n^{-(k-1)\alpha}\ln n<0$, since $(k-1)\alpha>0$ is a constant. If $m_1\in[1/\sqrt d,\epsilon]$, then $\frac{\partial}{\partial m_1}f(m_1,m_2)\le-\frac12\ln d+\frac{rpk}{1-p}\epsilon^{(k-1)\ln n}=\big(-\frac12\alpha+\frac{rpk}{1-p}\epsilon^{(k-1)}\big)\ln n<0$, since $\epsilon>0$ can be sufficiently small.
Therefore, $f(m_1,m_2)$ is decreasing on $m_1\in[2/d,\epsilon]$, and by symmetry  $f(m_1,m_2)$ is also decreasing on $m_2\in[2/d,\epsilon]$.
Thus
\begin{align*}
f(m_1,m_2)\le f\left(\frac2d,\frac2d\right)=&(1-s_0)\ln(1-s_0)-\frac4d\ln \frac2d-\left(1-s_0-\frac4d\right)\ln\left(1-s_0-\frac4d\right)\\
&-\frac4d\ln d+\left(1-s_0-\frac4d\right)\ln\left(1-\frac1d\right)+r\ln n\ln\left[1+\frac{2p(\frac2d)^k}{1-p}\right].\\
\end{align*}
A similar argument as (\ref{lns0}) implies that
$$(1-s_0)\ln(1-s_0)-\left(1-s_0-\frac4d\right)\ln\left(1-s_0-\frac4d\right)\le(1+o(1))\frac4d.$$
Consequently,
\begin{align*}
f(m_1,m_2)\le -\frac4d\ln2+\frac4d+\frac{2rp2^k}{1-p}\frac{\ln n}{d^k}+o\left(\frac1d\right)=-\frac2d\left(2\ln2-2-\frac{rp2^k}{1-p}\frac{\ln n}{d^{k-1}}+o(1)\right).\\
\end{align*}
Therefore
\begin{align}\label{eqn:s2:1}
\begin{split}
&\sum_{M_1= 2n/d}^{n\epsilon}\sum_{M_2= 2n/d}^{n\epsilon}\sum_{S=0}^{2n/d}W(M_1,M_2,S)\\
\le&\sum_{M_1= 2n/d}^{n\epsilon}\sum_{M_2= 2n/d}^{n\epsilon}\sum_{S=0}^{2n/d}\exp\{nf(m_1,m_2)\}\\
\le&\sum_{M_1= 2n/d}^{n\epsilon}\sum_{M_2= 2n/d}^{n\epsilon}\sum_{S=0}^{2n/d}\exp\left\{-2n^{1-\alpha}\Big(2\ln2-2-\frac{rp2^k}{1-p}\frac{\ln n}{d^{k-1}}\Big)\right\}=o(1).
\end{split}
\end{align}

If $s_0\le2/d$, it is obvious that
\begin{align}\label{eqn:s2:2}
\begin{split}
\sum_{M_1= 2n/d}^{n\epsilon}\sum_{M_2= 2n/d}^{n\epsilon}\sum_{S=0}^{S_0}W(M_1,M_2,S)\le o(1).
\end{split}
\end{align}

If $s_0\ge2/d$, then we need to consider the sum over $s\in[2/d,s_0],m_1,m_2\in[2/d,\epsilon]$, which will be worked out by considering the values of $m_1,m_2,s$ delicately.

	\textbf{Case 1.}  $m_1\in[2/d,1/\sqrt d]$. We will consider the following six sub-cases depending on the values of $m_2$ and $s$.

\textbf{Case 1.1. } $m_2,s\in[2/d,1/\sqrt d]$. Note that
\begin{align*}
\frac{\partial}{\partial m_1}(F(m_1,m_2,s)+G(m_1,m_2,s))\le&-\ln \frac2d-\ln(d-2)+rpk\ln n\left(\frac2{\sqrt d}\right)^{k-1}/(1-p)\\
=-\ln2&-\ln(1-\frac2d)+\frac{rpk}{1-p}\ln n\left(\frac2{\sqrt d}\right)^{k-1}=-\ln2+o(1)<0,\\
\frac{\partial}{\partial s}(F(m_1,m_2,s)+
G(m_1,m_2,s))\le &-\ln\frac2d-\ln(d-2)\\
&+ rkp\ln n\frac{(\frac2{\sqrt d})^{k-1}+(\frac2{\sqrt d})^{k-1}+(2p-1)(\frac1{\sqrt d})^{k-1}/(1-p)}{1-p}\\
=&-\ln2+o(1)<0.
\end{align*}

Thus in the case of $m_1,m_2,s\in[2/d,1/\sqrt d]$, $F(m_1,m_2,s)+ G(m_1,m_2,s)$ is decreasing with respect to $m_1,m_2$ and $s$, hence
\begin{align*}
F(m_1,m_2,s)+ G(m_1,m_2,s)\le F\left(\frac2d,\frac2d,\frac2d\right)+G\left(\frac2d,\frac2d,\frac2d\right).
\end{align*}
To be precise,
\begin{align*}
F\left(\frac2d,\frac2d,\frac2d\right)=&-\frac4d\ln\frac2d-\left(s_0-\frac2d\right)\ln\left(s_0-\frac2d\right)-\frac2d\ln\frac2d-
\left(1-\frac4d-s_0\right)\ln\left(1-\frac4d-s_0\right)\\
&+\left(1-\frac6d\right)\ln(d-2)
-\ln d+s_0\ln s_0+(1-s_0)\ln(1-s_0)\\
=&-\frac6d\ln2+\frac6d\ln d+\left(1-\frac6d\right)\ln d+\left(1-\frac6d\right)\ln\left(1-\frac2d\right)-\ln d
-\left(s_0-\frac2d\right)\ln\left(s_0-\frac2d\right)\\
&+s_0\ln s_0+\frac4d\ln\left(1-\frac4d-s_0\right)+(1-s_0)\left(\ln(1-s_0)-\ln\big(1-\frac4d-s_0\big)\right).
\end{align*}
A similar calculation as in (\ref{lns0}) shows that
$$-\left(s_0-\frac2d\right)\ln\left(s_0-\frac2d\right)+s_0\ln s_0+(1-s_0)\left(\ln(1-s_0)-\ln\big(1-\frac4d-s_0\big)\right)\le\frac6d+o\left(\frac1d\right).$$
Moreover, $(1-\frac6d)\ln(1-\frac2d)=-\frac2d+o(\frac1d)$. Then

$$F\left(\frac2d,\frac2d,\frac2d\right)\le-\frac6d\ln2+\frac4d+o\left(\frac1d\right).$$

At the same time,
\begin{align*}
 G\left(\frac2d,\frac2d,\frac2d\right)=&r\ln n\ln\left[1+\frac{p((\frac4d)^k+(\frac4d)^k)+(2p^2-p)(\frac2d)^k/(1-p)}{1-p+ps_0^k}\right]\\
\le&r\ln n\ln\left[1+\frac{2p4^k+(2p^2-p)2^k/(1-p)}{1-p}\frac1{d^k}\right]\\
=&\Theta\left(\frac{\ln n}{n^{k\alpha}}\right).
\end{align*}
Therefore
\begin{align}
\begin{split}
\sum_{M_1,M_2,S=\frac2dn}^{n/\sqrt d}W(M_1,M_2,S)\le&\sum_{M_1,M_2,S=\frac2dn}^{n/\sqrt d}\exp\left\{nF(\frac2d,\frac2d,\frac2d)+ nG(\frac2d,\frac2d,\frac2d)\right\}\\
\le&\sum_{M_1,M_2,S=\frac2dn}^{n/\sqrt d}\exp\left\{n\left(-\frac6d\ln2+\frac4d+\Theta\left(\frac{\ln n}{n^{k\alpha-1}}\right)\right)\right\}\\
=&o(1).
\end{split}
\end{align}
\textbf{Case 1.2. } $m_2\in[2/d,1/\sqrt d],s\in[1/\sqrt d,\epsilon]$, where $\epsilon>0$ is a sufficiently small constant. Then
\begin{align*}
&\frac{\partial}{\partial s}(F(m_1,m_2,s)+G(m_1,m_2,s))\\
\le&-\ln\frac1{\sqrt d}-\ln(d-2)+kpr\ln n\frac{2(2/d+\epsilon)^{k-1}+(2p-1)\epsilon^{k-1}/(1-p)}{1-p}\\
=&-\frac{\alpha}2\ln n+\frac{kpr}{(1-p)^2} \epsilon^{k-1}\ln n.
\end{align*}
Since $\epsilon>0$ can be sufficiently small, then $F(m_1,m_2,s)+G(m_1,m_2,s)$ is decreasing on $s\in[1/\sqrt d,\epsilon]$ in this case.
\begin{align*}
F(m_1,m_2,s)+r\ln nG(m_1,m_2,s)\le F\left(\frac2d,\frac2d,\frac1{\sqrt d}\right)+G\left(\frac1{\sqrt d},\frac1{\sqrt d},\frac1{\sqrt d}\right). \\
\end{align*}
Specifically, we estimate $F(\frac2d,\frac2d,\frac1{\sqrt d})$ first.
\begin{align*}
F\left(\frac2d,\frac2d,\frac1{\sqrt d}\right)=&-\frac4d\ln\frac2d-\left(s_0-\frac1{\sqrt d}\right)\ln\left(s_0-\frac1{\sqrt d}\right)-\frac1{\sqrt d}\ln\frac1{\sqrt d}-\left(1-\frac4d-s_0\right)\ln\left(1-\frac4d-s_0\right)\\
&+\left(1-\frac4d-\frac1{\sqrt d}\right)\ln(d-2)-\ln d+s_0\ln s_0+(1-s_0)\ln(1-s_0)\\
=&-\frac{4\ln2}{d}-\frac1{2\sqrt d}\ln d+\left(1-\frac4d-\frac1{\sqrt d}\right)\ln\left(1-\frac2d\right)
+s_0\ln s_0\\
&-\left(s_0-\frac1{\sqrt d}\right)\ln\left(s_0-\frac1{\sqrt d}\right)+(1-s_0)\ln(1-s_0)-\left(1-\frac4d-s_0\right)\ln\left(1-\frac4d-s_0\right).
\end{align*}
By a similar argument with before, we obtain
\begin{align*}
F\left(\frac2d,\frac2d,\frac1{\sqrt d}\right)\le&-(1+o(1))\frac{1}{2\sqrt d}\ln d.
\end{align*}
Additionally,
\begin{align}\label{sqrtGd}
\begin{split}
 G(\frac1{\sqrt d},\frac1{\sqrt d},\frac1{\sqrt d})=&r\ln n\ln\left[1+\frac{p(\frac2{\sqrt d})^k+p(\frac2{\sqrt d})^k+\frac{2p^2-p}{1-p}(\frac1{\sqrt d})^k}{1-p+ps_0^k}\right]\\
\le&r\ln n\ln\left[1+\frac{2p2^k+\frac{2p^2-p}{1-p}}{1-p}\frac1{(\sqrt d)^k}\right]\\
=&\Theta\left(\frac{\ln n}{(\sqrt d)^k}\right).
\end{split}
\end{align}

Thus
\begin{align}
\begin{split}
\sum_{M_1,M_2=\frac2dn}^{n/\sqrt d}\sum_{S=n/\sqrt d}^{\epsilon n}W(M_1,M_2,S)\le&\sum_{M_1,M_2=\frac2dn}^{n/\sqrt d}\sum_{S=n/\sqrt d}^{\epsilon n}
\exp\left\{nF\left(\frac2d,\frac2d,\frac1{\sqrt d}\right)+ nG\left(\frac1{\sqrt d},\frac1{\sqrt d},\frac1{\sqrt d}\right)\right\}\\
=&\sum_{M_1,M_2=\frac2dn}^{n/\sqrt d}\sum_{S=n/\sqrt d}^{\epsilon n}\exp\left\{n\left(-(1+o(1))\frac{1}{2\sqrt d}\ln d+\Theta\left(\frac{\ln n}{(\sqrt d)^k}\right)\right)\right\}\\
=&o(1).
\end{split}
\end{align}

\textbf{Case 1.3. }  $m_2\in[2/d,1/\sqrt d],s\in[\epsilon,s_0]$.
Note that in this case  $s_1=m_1+s\ge\epsilon,s_2=m_2+s\ge\epsilon$, thus it follows immediately from Lemma \ref{EpsilonTo1} that
\begin{align}
\begin{split}
\sum_{M_1,M_2=\frac2dn}^{n/\sqrt d}\sum_{S=\epsilon n}^{S_0}W(M_1,M_2,S)=o(1).
\end{split}
\end{align}

\textbf{Case 1.4. } $ m_2\in[1/\sqrt d,\epsilon],s\in[2/d,1/\sqrt d]$.
By symmetry of $m_1,m_2$ and from (\ref{partialFG}),
\begin{align}
\begin{split}
&\frac{\partial}{\partial m_2}(F(m_1,m_2,s)+G(m_1,m_2,s))\\
\le&-\ln \frac1{\sqrt d}-\ln(d-2)+(1+o(1))\frac{rpk}{1-p}\epsilon^k\ln n\\
=&-\left(\frac12\alpha-(1+o(1))\frac{rpk}{1-p}\epsilon^k\right)\ln n,
\end{split}
\end{align}
which is negative since $\epsilon>0$ is sufficiently small. Thus $F(m_1,m_2,s)+G(m_1,m_2,s)$ is decreasing on $ m_2\in[1/\sqrt d,\epsilon]$ in this case. Also, (\ref{partialFG}) entails that $F(m_1,m_2,s)$ is decreasing on $m_1,s\in[2/d,1/\sqrt d]$, while $G(m_1,m_2,s)$ is increasing on $m_1,s\in[2/d,1/\sqrt d]$, hence
\begin{align*}
\begin{split}
&F(m_1,m_2,s)+G(m_1,m_2,s)\\
\le &F\left(m_1,\frac1{\sqrt d},s\right)+G\left(m_1,\frac1{\sqrt d},s\right)\\
\le &F\left(\frac2d,\frac1{\sqrt d},\frac2d\right)+G\left(\frac1{\sqrt d},\frac1{\sqrt d},\frac1{\sqrt d}\right)\\
=&-\frac4d\ln\frac2d-\frac1{\sqrt d}\ln\frac1{\sqrt d}-\left(s_0-\frac1{\sqrt d}\right)\ln\left(s_0-\frac1{\sqrt d}\right)-\left(1-\frac2d-\frac1{\sqrt d}-s_0\right)\ln\left(1-\frac2d-\frac1{\sqrt d}-s_0\right)\\
&+\left(1-\frac4d-\frac1{\sqrt d}\right)\ln(d-2)
-\ln d+s_0\ln s_0+(1-s_0)\ln(1-s_0)+G\left(\frac1{\sqrt d},\frac1{\sqrt d},\frac1{\sqrt d}\right)\\
=&-\frac4d\ln 2-\frac1{2\sqrt d}\ln d+\left(1-\frac4d-\frac1{\sqrt d}\right)\ln\left(1-\frac2d\right)-\left(s_0-\frac1{\sqrt d}\right)\ln\left(s_0-\frac1{\sqrt d}\right)\\
&+s_0\ln s_0-\left(1-\frac2d-\frac1{\sqrt d}-s_0\right)\ln\left(1-\frac2d-\frac1{\sqrt d}-s_0\right)+(1-s_0)\ln(1-s_0)\\
\le&-\frac1{2\sqrt d}\ln d+(1+o(1))\frac2{\sqrt d}+\Theta\left(\frac{\ln n}{(\sqrt d)^k}\right).
\end{split}
\end{align*}
Combining with (\ref{sqrtGd}) yields the desired result that
\begin{align}
\begin{split}
\sum_{M_1=\frac2dn}^{n/\sqrt d}\sum_{M_2=n/\sqrt d}^{\epsilon n}\sum_{S=\epsilon n}^{S_0}W(M_1,M_2,S)=o(1).
\end{split}
\end{align}

\textbf{Case 1.5. } $m_2,s\in[1/\sqrt d,\epsilon]$. From (\ref{partialFG}),
\begin{align*}
\begin{split}
\frac{\partial}{\partial m_2}F(m_1,m_2,s)+\frac{\partial}{\partial m_2}G(m_1,m_2,s)&\le-\ln \frac1{\sqrt d}-\ln(d-2)+r\ln n\frac{kp2^{k-1}}{1-p}\epsilon^k\\
\frac{\partial}{\partial s}F(m_1,m_2,s)+\frac{\partial}{\partial s}G(m_1,m_2,s)&\le-\ln \frac1{\sqrt d}-\ln(d-2)+rkp\ln n\frac{2^{k-1}+p/(1-p)}{1-p}\epsilon^k\\
=&-\left(\frac12\alpha-\frac{rpk(2^{k-1}+p/(1-p))}{1-p}\epsilon^k\right)\ln n-\ln\left(1-\frac1d\right),
\end{split}
\end{align*}
where both partial derivatives are negative since $r,k,p$ are constants and $\epsilon>0$ is sufficiently small.
Therefore
\begin{align}
\begin{split}
&F(m_1,m_2,s)+G(m_1,m_2,s)\le F\left(\frac2d,\frac1{\sqrt d},\frac1{\sqrt d}\right)+G\left(\frac1{\sqrt d},\frac1{\sqrt d},\frac1{\sqrt d}\right)
\end{split}
\end{align}
Specifically,
\begin{align*}
\begin{split}
 &F\left(\frac2d,\frac1{\sqrt d},\frac1{\sqrt d}\right)\\
 =&-\frac2d\ln\frac2d-\frac2{\sqrt d}\ln\frac1{\sqrt d}-\left(1-\frac2d-\frac2{\sqrt d}\right)\ln(d-2)-\ln d+s_0\ln s_0-\left(s_0-\frac1{\sqrt d}\right)\ln\left(s_0-\frac1{\sqrt d}\right)\\
&-\left(1-\frac2d-\frac1{\sqrt d}-s_0\right)\ln\left(1-\frac2d-\frac1{\sqrt d}-s_0\right)+(1-s_0)\ln(1-s_0).
\end{split}
\end{align*}
A similar argument as in (\ref{lns0}) implies that
\begin{align*}
&s_0\ln s_0-\left(s_0-\frac1{\sqrt d}\right)\ln\left(s_0-\frac1{\sqrt d}\right)+(1-s_0)\ln(1-s_0)\\
&-\left(1-\frac2d-\frac1{\sqrt d}-s_0\right)\ln\left(1-\frac2d-\frac1{\sqrt d}-s_0\right)\\
\le&(1+o(1))\frac2{\sqrt d}.
\end{align*} 
Thus,
\begin{align*}
 F\left(\frac2d,\frac1{\sqrt d},\frac1{\sqrt d}\right)\le&-(1+o(1))\frac{\ln d}{\sqrt d}.
\end{align*}
Further, recall from  (\ref{sqrtGd}) that
\begin{align*}
G\left(\frac1{\sqrt d},\frac1{\sqrt d},\frac1{\sqrt d}\right)=&\Theta\left(\frac{\ln n}{(\sqrt d)^k}\right).
\end{align*}
It is now immediate that
\begin{align}
\begin{split}
&\sum_{M_1=2n/d}^{n/\sqrt d}\sum_{M_2=n/\sqrt d}^{\epsilon n}\sum_{S=n/\sqrt d}^{\epsilon n}W(M_1,M_2,S)\\
\le&\sum_{M_1=2n/d}^{n/\sqrt d}\sum_{M_2=n/\sqrt d}^{\epsilon n}\sum_{S=n/\sqrt d}^{\epsilon n}\exp\left\{nF(m_1,m_2,s)+ nG(m_1,m_2,s)\right\}\\
\le&\sum_{M_1=2n/d}^{n/\sqrt d}\sum_{M_2=n/\sqrt d}^{\epsilon n}\sum_{S=n/\sqrt d}^{\epsilon n}\exp\left\{-(1+o(1))\frac{n\ln d}{\sqrt d}+\Theta\left(\frac{n\ln n}{(\sqrt d)^k}\right)\right\}\\
=&o(1).
\end{split}
\end{align}

\textbf{Case 1.6. } $m_2\in[1/\sqrt d,\epsilon],s\in[\epsilon,s_0]$. Recalling that $m_1\in[2/d,1/\sqrt d]$, thus we can claim that
 $F(m_1,m_2,s)$ is dominated by $(-m_2-s)\ln d$. Hence we have
\begin{align*}
\begin{split}
&F(m_1,m_2,s)+G(m_1,m_2,s)\\
=&(1+o(1))\left[-\alpha (m_2+s)+r\ln\left(1+\frac{p(m_2+s)^k+p^2s^k/(1-p)}{1-p+ps_0^k}\right)\right]\ln n\\
=&(1+o(1))g(m_2,s)\ln n,
\end{split}
\end{align*}
where $g(m_2,s)$ is defined by (\ref{hs2s}). By a similar argument, $g(m_2,s)=-\Theta(1)<0$ if $m_2\in[1/\sqrt d,\epsilon],s\in[\epsilon,s_0]$.

\begin{align}
\begin{split}
&\sum_{M_1=2n/d}^{n/\sqrt d}\sum_{M_2=n/\sqrt d}^{\epsilon n}\sum_{S=\epsilon n}^{S_0}W(M_1,M_2,S)\\
\le&\sum_{M_1=2n/d}^{n/\sqrt d}\sum_{M_2=n/\sqrt d}^{\epsilon n}\sum_{S=\epsilon n}^{S_0}\exp\{-\Theta(n\ln n)\}\\
=&o(1).
\end{split}
\end{align}

Combining cases 1.1-1.6, we obtain that
\begin{align}\label{eqn:s2:case1}
\begin{split}
\sum_{M_1=2n/d}^{n/\sqrt d}\sum_{M_2=2n/d}^{\epsilon n}\sum_{S=2n/d}^{S_0}\exp\{n\Psi(m_1,m_2,s)\}=o(1).
\end{split}
\end{align}

\textbf{Case 2. } $m_1\in[1/\sqrt d,\epsilon]$.
By symmetry of $m_1$ and $m_2$, and from results of cases 1.4-1.6 we see that
\begin{align}\label{eqn:case2:0}
\begin{split}
\sum_{M_1=n/\sqrt d}^{\epsilon n}\sum_{M_2=2n/d}^{ n/\sqrt d}\sum_{S=2n/d}^{S_0}\exp\{n\Psi(m_1,m_2,s)\}=o(1).
\end{split}
\end{align}

\textbf{Case 2.1. }  $m_2\in[1/\sqrt d,\epsilon]$, $s\in[2/d,1/\sqrt d]$.
Note that
\begin{align*}
\begin{split}
\frac{\partial}{\partial m_1}(F(m_1,m_2,s)+G(m_1,m_2,s))&\le-\ln \frac1{\sqrt d}-\ln(d-2)+\frac{rpk}{1-p}\left(\epsilon+\frac1{\sqrt d}\right)^{k-1}\ln n\\
&=-\frac12\alpha\ln n+\frac{rpk}{1-p}\left(\epsilon+\frac1{\sqrt d}\right)^{k-1}\ln n+o(1),
\end{split}
\end{align*}
which is negative since $\epsilon>0$ can be sufficiently small. Moreover, by symmetry of $m_1,m_2$ and note that $\frac{\partial}{\partial s}F(m_1,m_2,s)<0$ on $s\in[2/d,\sqrt d]$, then

\begin{align*}
\begin{split}
F(m_1,m_2,s)+G(m_1,m_2,s)\le F\left(\frac1{\sqrt d},\frac1{\sqrt d},\frac2d\right)+G\left(\frac1{\sqrt d},\frac1{\sqrt d},\frac1{\sqrt d}\right).
\end{split}
\end{align*}
Firstly, note that
\begin{align*}
\begin{split}
 &F\left(\frac1{\sqrt d},\frac1{\sqrt d},\frac2d\right)\\
 =&-\frac2{\sqrt d}\ln\frac1{\sqrt d}-\frac2d\ln\frac2d+\left(1-\frac2{\sqrt d}-\frac2d\right)\ln(d-2)-\ln d-\left(s_0-\frac2d\right)\ln\left(s_0-\frac2d\right)\\
 &-\left(1-\frac2{\sqrt d}-s_0\right)\ln\left(1-\frac2{\sqrt d}-s_0\right)+s_0\ln s_0+(1-s_0)\ln(1-s_0)\\
 =&-(1+o(1))\frac1{\sqrt d}\ln d.
\end{split}
\end{align*}
At the same time,  (\ref{sqrtGd}) entails that
\begin{align*}
\begin{split}
G(\frac1{\sqrt d},\frac1{\sqrt d},\frac1{\sqrt d})=\Theta\left(\frac{\ln n}{(\sqrt d)^k}\right).
\end{split}
\end{align*}

Therefore,
\begin{align}\label{eqn:case2:1}
\begin{split}
&\sum_{M_1=n/\sqrt d}^{\epsilon n}\sum_{M_2=n/\sqrt d}^{\epsilon n}\sum_{S=2n/d}^{n/\sqrt d}\exp\{nF(m_1,m_2,s)+nG(m_1,m_2,s)\}\\
\le&\sum_{M_1=n/\sqrt d}^{\epsilon n}\sum_{M_2=n/\sqrt d}^{\epsilon n}\sum_{S=2n/d}^{n/\sqrt d}\exp\left\{n\ln n(-(1+o(1))\frac\alpha{\sqrt d}+\Theta\left(n^{-k\alpha/2}\right)\right\}\\
=&o(1).
\end{split}
\end{align}

\textbf{Case 2.2. } $m_2\in[1/\sqrt d,\epsilon]$, $s\in[1/\sqrt d,\epsilon]$.
In this case, (\ref{partialFG}) entails that $F(m_1,m_2,s)+G(m_1,m_2,s)$ decreases with respect to $m_1,m_2,s$. Thus
\begin{align*}
\begin{split}
F(m_1,m_2,s)+G(m_1,m_2,s)\le F\left(\frac1{\sqrt d},\frac1{\sqrt d},\frac1{\sqrt d}\right)+G\left(\frac1{\sqrt d},\frac1{\sqrt d},\frac1{\sqrt d}\right).
\end{split}
\end{align*}
We can estimate it as before that
\begin{align*}
\begin{split}
 F\left(\frac1{\sqrt d},\frac1{\sqrt d},\frac1{\sqrt d}\right)=&-\frac3{\sqrt d}\ln\frac1{\sqrt d}+\left(1-\frac3{\sqrt d}\right)\ln(d-2)-\ln d-\left(s_0-\frac1{\sqrt d}\right)\ln\left(s_0-\frac1{\sqrt d}\right)\\
 &-\left(1-\frac2{\sqrt d}-s_0\right)\ln\left(1-\frac2{\sqrt d}-s_0\right)+s_0\ln s_0+(1-s_0)\ln(1-s_0)\\
 =&-(1+o(1))\frac3{2\sqrt d}\ln d.
\end{split}
\end{align*}
At the same time
\begin{align*}
\begin{split}
G(\frac1{\sqrt d},\frac1{\sqrt d},\frac1{\sqrt d})=\Theta\left(\frac{\ln n}{(\sqrt d)^k}\right).
\end{split}
\end{align*}
Therefore,
\begin{align}\label{eqn:case2:2}
\begin{split}
&\sum_{M_1=n/\sqrt d}^{\epsilon n}\sum_{M_2=n/\sqrt d}^{\epsilon n}\sum_{S==n/\sqrt d}^{\epsilon n}\\
=&\sum_{M_1=n/\sqrt d}^{\epsilon n}\sum_{M_2=n/\sqrt d}^{\epsilon n}\sum_{S==n/\sqrt d}^{\epsilon n}\exp\left\{n\ln n\left(-(1+o(1))\frac{3\alpha}{2}n^{-\alpha/2}+\Theta\left(n^{-k\alpha/2}\right)\right)\right\}\\
=&o(1).
\end{split}
\end{align}

\textbf{Case 2.3. } $m_2\in[1/\sqrt d,\epsilon]$, $s\in[\epsilon,s_0]$. In this case we can see that $s_1=m_1+s\ge\epsilon,s_2=m_2+s\ge\epsilon$.
It follows by applying Lemma \ref{EpsilonTo1} that
\begin{align}\label{eqn:case2:3}
\begin{split}
\sum_{M_1=n/\sqrt d}^{\epsilon n}\sum_{M_2=n/\sqrt d}^{\epsilon n}\sum_{S=\epsilon n}^{S_0}W(M_1,M_2,S)=o(1).
\end{split}
\end{align}

It follows from (\ref{eqn:case2:0}), (\ref{eqn:case2:1}),(\ref{eqn:case2:2}) and (\ref{eqn:case2:3}) that
\begin{align}\label{eqn:s2:case2}
\begin{split}
\sum_{M_1=n/\sqrt d}^{\epsilon n}\sum_{M_2=2n/d}^{\epsilon n}\sum_{S=2n/d}^{S_0}W(M_1,M_2,S)=o(1).
\end{split}
\end{align}

Combining (\ref{eqn:s1}), (\ref{eqn:s2:1}), (\ref{eqn:s2:2}), (\ref{eqn:s2:case1}) and (\ref{eqn:s2:case2}) completes the proof.

\subsubsection{Proof of Lemma 4.}

Assume that $m_2\in[0,2/d]$. By symmetry of $m_1$ and $m_2$ and Step 2 of the proof of Lemma \ref{m102d}, we have
\begin{align}\label{geepsilon1}
\begin{split}
\sum_{M_1\ge\epsilon n,0\le M_2\le 2n/d,(M_1,M_2,s)\in\mathfrak{M}}W(M_1,M_2,S)=o(1).
\end{split}
\end{align}

If $m_2\in[2/d,\epsilon]$. By symmetry of $m_1$ and $m_2$ and Step 2 and 3 of the proof of Lemma \ref{lemma:3}, we have

\begin{align}\label{geepsilon2}
\begin{split}
\sum_{M_1\ge\epsilon n,2n/d\le M_2\le \epsilon n,(M_1,M_2,s)\in\mathfrak{M}}W(M_1,M_2,S)=o(1).
\end{split}
\end{align}

Finally, if $m_2\ge\epsilon$. Now note that $s_1=m_1+s\ge\epsilon,s_2=m_2+s\ge\epsilon$, then recall the results of Lemma \ref{EpsilonTo1}, we have
\begin{align}\label{geepsilon3}
\begin{split}
\sum_{M_1\ge\epsilon n, M_2\ge \epsilon n,(M_1,M_2,s)\in\mathfrak{M}}W(M_1,M_2,S)=o(1).
\end{split}
\end{align}
Combining (\ref{geepsilon1}) ,(\ref{geepsilon2}) and (\ref{geepsilon3}) completes the proof.
\section{Proof of Theorem \ref{thm:hard} }\label{proof:th2}

We prove Corollaries \ref{distribution} and \ref{thm:thirdmoment} first.
\subsection{
Proof of Corollary \ref{distribution}}
First note that the conditional probability series $\{\mathcal{P}[\sigma \text{ is a solution of }I|\mathcal{E}_t],t\ge1\}$ is an increasing sequence with respect to $t$, thus
\begin{align*}
 \mathcal{P}[\sigma \text{ is a solution of }I|\mathcal{E}_t]\ge\mathcal{P}[\sigma \text{ is a solution of }I|\mathcal{E}_1]=\mathcal{P} [\sigma\text{ is a solution of }I].
\end{align*}
Further, for any instance $I\in\mathbb{S}^*$ which contains a hidden assignment $\sigma$, using Bayes' rule we see that
\begin{align*}
   \mathcal{P}_t^*&=\mathbf{Pr}[\mathcal{E}_t|\sigma\text{ is a solution }] \\
  &=\frac{\mathbf{Pr} [\sigma\text{ is a solution of }I,\mathcal{E}_t]}{\mathbf{Pr} [\sigma\text{ is a solution of }I]} \\
  &=\frac{\mathbf{Pr} [\sigma\text{ is a solution of }I|\mathcal{E}_t]}{\mathbf{Pr} [\sigma\text{ is a solution of }I]}\mathbf{Pr}(\mathcal{E}_t)\\
&\ge\mathbf{Pr}(\mathcal{E}_t)=\mathcal{P}_t.
\end{align*}
Moreover, the expected number of satisfying assignments of instances that are forced to satisfy $\sigma$ is
\begin{align*}
  \mathbf{E}_f[N^*] &=\sum_{t=1}^{d^n}t\cdot \mathbf{Pr}[\mathcal{E}_t|\sigma\text{ is a solution }]=\sum_{t=1}^{d^n}t\mathcal{P}_t^*.
\end{align*}
Note that $\mathbf{E}[N]=\sum_{t=1}^{d^n}t\mathcal{P}_t$, also according to \cite{xu2000} it holds that
\begin{align*}
 \lim_{n\rightarrow\infty} \frac{\mathbf{E}_f[N^*]}{\mathbf{E}[N]}= \lim_{n\rightarrow\infty} \frac{\mathbf{E}[N^2]}{\mathbf{E}^2[N]}=1.
\end{align*}
In addition, combining the fact that $\mathcal{P}_t^*\ge\mathcal{P}_t,$ we have
\begin{align*}
\mathcal{P}_t^*\sim\mathcal{P}_t.
\end{align*}

Analogously, for any instance $I\in\mathbb{S}^{**}$ which contains two satisfying assignments, say $\sigma,\tau$, the probability that $I$ contains $t$ solutions is $ \mathcal{P}_t^{**}=\mathbf{Pr}[\mathcal{E}_t|\sigma,\tau\text{ are solutions of }I] $. A similar analysis yields that

\begin{align*}
   \mathcal{P}_t^{**}
  &=\frac{\mathbf{Pr} [\sigma,\tau\text{ are solutions of }I,\mathcal{E}_t]}{\mathbf{Pr} [\sigma,\tau\text{ are solutions of }I]} \\
  &=\frac{\mathbf{Pr} [\sigma,\tau\text{ are solutions of }I|\mathcal{E}_t]}{\mathbf{Pr} [\sigma,\tau\text{ are solutions of }I]}\mathbf{Pr}(\mathcal{E}_t)\\
&\ge\mathbf{Pr}(\mathcal{E}_t)=\mathcal{P}_t.
\end{align*}
Furthermore, the expected number of solutions for
\begin{align*}
  \mathbf{E}_f[N^{**}] &=\sum_{t=1}^{d^n}t\cdot \mathbf{Pr}[\mathcal{E}_t|\sigma,\tau\text{ are solutions }]=\sum_{t=1}^{d^n}t\mathcal{P}_t^{**}.
\end{align*}
Combining $\mathcal{P}_t^{**}\ge\mathcal{P}_t$ and Theorem \ref{expectation} that,
\begin{align*}
 \lim_{n\rightarrow\infty} \frac{\mathbf{E}_f[N^{**}]}{\mathbf{E}[N]}=1,
\end{align*}
we obtain
\begin{align*}
 \mathcal{P}_t^{**}\sim\mathcal{P}_t.
\end{align*}


\subsection{Proof of Corollary \ref{thm:thirdmoment}}\label{sec:th3}
Writing the third moment out as a sum over all possible triplets, we obtain
\begin{align*}
& \mathbf{E}[N^3]=\sum_{\sigma,\tau,\omega}\mathbf{Pr}[\sigma,\tau,\omega\text{ are satisfying assignments}]\\
&=\sum_{\sigma,\tau,\omega}\mathbf{Pr}[\omega\text{ is a satisfying assignment }|\sigma,\tau\text{ are satisfying assignments}]\mathbf{Pr}[\sigma,\tau\text{ are satisfying assignments}]\\
&=\mathbf{E}_{f}[N^{**}]\sum_{\sigma,\tau}\mathbf{Pr}[\sigma,\tau\text{ are satisfying assignments}]\\
&=\mathbf{E}_{f}[N^{**}]\mathbf{E}[N^2].
\end{align*}

Recall the results from \cite{xu2000} that $\lim_{n\rightarrow\infty}\frac{\mathbf{E}[N^2]}{\mathbf{E}[N]^2}=1$, and combine with Theorem \ref{expectation} completes the proof.
\subsection{Proof of Theorem \ref{thm:hard}}
From Theorem \ref{expectation} we know that, the expected number of solutions of forced RB instances with two hidden solutions is asymptotically the same with that of unforced RB instances, which means that the sizes of the solution spaces $\mathbb{S}$ and $\mathbb{S}^{**}$ are asymptotically equal. In addition, Corollaries \ref{distribution} and  \ref{thm:thirdmoment} entail that, the solutions of random RB instances are distributed uniformly, and more importantly the distribution of random forced RB instances with two arbitrary hidden solutions in the entire instance space is asymptotically and uniformly the same with that of random unforced RB instances.  Therefore the important facts that the expected number of solutions and distribution of instances are asymptotically the same determine that, the performance of an algorithm on a randomly generated forced RB instance with two hidden solutions will be same with unforced ones.

\section{Hiding exact maximum independent sets in graphs}

 As shown in \cite{ju1998}, finding any large clique in random graphs with a constant number of hidden cliques is still hard, thus by considering cliques as private keys and the entire graph as a public key, hidden cliques provide an elementary means of creating cryptographically secure primitives. It is well known that an independent set of a graph is also a clique of the complementary graph. Therefore our results suggest applications to cryptography in that, we provide a way to hide \emph{exact} maximum independent sets (MISs) with precise size $n$ in a random graph model generated based on model RB.

To be precise, by considering the variables and binary clauses as vertices and edges respectively, one can generate a random graph model basing on Model RB as follows (http://sites.nlsde.buaa.edu.cn/~kexu/):
 \begin{itemize}
   \item  Generate $n$ disjoint cliques, each contains $n^\alpha$ vertices (where $\alpha>0$ is a constant).
   \item  Randomly select two distinct cliques, and generate $pn^{2\alpha}$ random edges between the two cliques (without repetition, and $0<p<1$ is a constant).
\item Run step 2 for another $rn\ln n-1$ times (with repetition, and $r>0$ is a constant).
\end{itemize}
For such a graph model, the size of the maximum independent set (MIS) is at most $n$. The upper bound $n$ can be reached if and only if there exists a solution of the corresponding RB instances.

To generate a random graph with a hidden independent set of size $n$, we can use the following steps:
 \begin{itemize}
   \item  Generate $n$ disjoint cliques, each contains $n^\alpha$ vertices (where $\alpha>0$ is a constant).
   \item  Select from each clique a vertex at random to form an independent set of size $n$.
   \item  Randomly select two distinct cliques, and generate $pn^{2\alpha}$ random edges between the two cliques (without repetition, and $0<p<1$ is a constant), where no edge is allowed to violate the maximum independent set.
   \item Run step 3 for another $rn\ln n-1$ times (with repetition, and $r>0$ is a constant).
\end{itemize}

In fact, one can hide two or more MISs following the process above.


\section{Conclusions}

In this paper, we have shown that the expected number of solutions, the distribution as well as the hardness of forced RB instances with two arbitrary hidden solutions  are asymptotically the same with the ones of unforced RB instances.

We close with several questions for future work. First, it would be interesting to investigate the algorithm performance on forced RB instances. Second, determine the maximum number of solutions  we can hide while without violating most of the characteristics of the ensemble, e.g. expected number of solutions, the distribution of the number of solutions, the hardness at the threshold and solution space structure.  Third, higher order moment arguments will be involved if we plant more solutions, and are potentially difficult to evaluate. Does there exist relatively feasible mathematical techniques to tackle such problems?

\vspace{0.3cm}
\textbf{Acknowledgment.} I am very grateful to Prof. Ke Xu for suggesting the problem, and for helpful discussions.


\end{document}